\newtheorem{theorem}{Theorem}
\newtheorem{corollary}{Corollary}
\newtheorem{lemma}{Lemma}
\DeclareMathOperator{\operatorClassP}{P}
\newcommand{\classP}{\ensuremath{\operatorClassP}}
\DeclareMathOperator{\operatorClassXP}{XP}
\newcommand{\classXP}{\ensuremath{\operatorClassXP}}
\DeclareMathOperator{\operatorClassNP}{NP}
\newcommand{\classNP}{\ensuremath{\operatorClassNP}}
\DeclareMathOperator{\operatorClassFPT}{FPT}
\newcommand{\classFPT}{\ensuremath{\operatorClassFPT}}
\DeclareMathOperator{\operatorClassW}{W}
\newcommand{\classW}[1]{\ensuremath{\operatorClassW[#1]}}
\newcommand{\DAKC}{\textsc{Dir-AKC}\xspace}
\newcommand{\psc}{\textsc{Partial Set Cover}\xspace}
\begin{document}

\title{Parameterized Complexity of the Anchored $k$-Core Problem for Directed Graphs\footnote{This work is supported by the European Research Council (ERC) via grant Rigorous Theory of Preprocessing, reference
267959 and by NSF CAREER award 1053605, NSF grant CCF-1161626, ONR YIP award N000141110662, DARPA/AFOSR grant FA9550-12-1-0423, a University
of Maryland Research and Scholarship Award (RASA) and a Summer International Research Fellowship from the University of
Maryland.
}}

\author{%
Rajesh Chitnis\thanks{Department of Computer Science , University of Maryland at College Park, USA. Email: {\tt rchitnis@cs.umd.edu.}}
\and
Fedor V. Fomin\thanks{Department of Informatics, University of Bergen, PB 7803, 5020 Bergen, Norway. Email: {\tt{\{fedor.fomin, petr.golovach\}@ii.uib.no}}}
\addtocounter{footnote}{-1}
\and 
Petr A. Golovach\footnotemark
}

\date{}

\maketitle

\begin{abstract} Motivated by the study of  unraveling processes in social networks,
Bhawalkar,   Kleinberg,   Lewi,   Roughgarden,  and Sharma [ICALP 2012] introduced the \textsc{Anchored $k$-Core} problem,
where the task is  for a given graph $G$ and integers $b, k$, and $p$ to find an induced subgraph $H$  with at least $p$
vertices (the core) such that all but at most $b$ vertices (called anchors) of  $H$ are of
degree at least $k$. In this paper, we extend the notion of $k$-core to directed graphs and provide a number of new
algorithmic and complexity results for the directed version of the problem. We show that
\begin{itemize}
\item
The decision version of the problem is   \classNP-complete for every $k\geq 1$  even if the input graph is restricted to be a planar directed acyclic graph of maximum degree at
most $k+2$.
\item The problem is fixed parameter tractable (\classFPT) parameterized by the size of the core $p$ for  $k=1$, and \classW1-hard for $k\geq 2$.
\item When the maximum degree of the graph is at most $\Delta$,   the
 problem is  \classFPT\  parameterized by $p+\Delta$ if $k\geq \frac{\Delta}{2}$.
\end{itemize}
\end{abstract}

\section{Introduction}\label{sec:intro}
The anchored $k$-core problem can be explained by the following illustrative example. We want
to organize a workshop on Theory of Social Networks. We send invitations to most  distinguished researchers in the area and
received many  replies of the following nature: ``Yes, in theory, I would be happy to come but my final decision  depends on
how many   people I know will be there."  Thus  we have a list of  tentative participants, but some of them can cancel their
participation and we are afraid that the cancellation process
may escalate. On the other hand, we also have  limited funds to reimburse travel expenses for
a small number of participants, which we believe, will guarantee their participation. Thus we want to ``anchor" a small subset
of  participants whose guaranteed participation would prevent the unraveling process, and by
fixing a small group   we hope to minimize the number of cancellations, or equivalently,
maximize the number of participants, or the core.

Unraveling processes are common for social networks where the behavior of an individual is often influenced by the actions of
her/his friends. New events occur quite often in social networks: some examples are usage of a particular cell phone brand,
adoption of a new drug within the medical profession, or the rise of a political movement in an unstable society. To estimate
whether these events or ideas spread extensively or die out soon, one has  to model and study the dynamics of \emph{influence
propagation} in social networks. Social networks are generally represented by making use of  undirected or directed graphs,
where the edge set represents  the relationship between individuals in the network. Undirected graph  model works fine for
some networks, say Facebook, but the nature of interaction on some social networks such as Twitter is asymmetrical:  the fact
that  user $A$ follows  user $B$ does not imply that that user $B$ also follows $A$.\footnote{The first author follows LeBron
James on Twitter (and so do 8,017,911 other people), but he only follows 302 people with the first author not being one of
them.}  In this case, it is more appropriate to model interactions in  the network by \textbf{directed} graphs. We add a
directed edge $(u,v)$ if $v$ follows $u$.

In this work we are interested in the  model of \emph{user engagement}, where each individual with less than $k$ people to
follow (or equivalently whose in-degree is less than $k$) drops out of the network. This
process can be contagious, and may affect even those individuals who initially were linked to more than $k$ people, say follow
on Twitter. An extreme example of this was given by Schelling (see page 17 of ~\cite{schelling2006micromotives}): consider a
directed path on $n$ vertices and let $k=1$. The left-endpoint has in-degree zero, it drops out and now the in-degree of its
only out-neighbor in the path becomes zero and it drops out as well. It is not hard to see that this way the whole network
eventually drops out as the result of a \emph{cascade of iterated withdrawals}. In general at the end of all the iterated
withdrawals the remaining engaged individuals form a unique maximal induced subgraph whose minimum in-degree is at least $k$.
This is called as the \emph{$k$-core} and is a well-known concept in the theory of social networks. It was introduced by
Seidman~\cite{seidman-k-core} and also been studied in various social sciences
literature~\cite{chwe1999structure,chwe2000communication}.

\medskip
\noindent\textbf{Preventing Unraveling:} The unraveling process described above in Schelling's example of a directed path can
be  highly undesirable in many scenarios.  How can one attempt to prevent this unraveling? In Schelling's example it is easy
to see: if we ``buy" the left end-point person into being engaged then the whole path becomes engaged. In general we overcome
the issue of unraveling by allowing some ``anchors":  these are the vertices that remain engaged irrespective of their
payoffs. This can be achieved by giving them extra incentives or discounts. The hope is that with a few \emph{anchors} we can
now ensure a large subgraph remains engaged. This subgraph is  called as the \emph{anchored $k$-core}: each non-anchor vertex
in this induced subgraph must have in-degree at least $k$ while the anchored vertices can have arbitrary in-degrees. The
problem of identifying $k$-cores in a network also has the following  game-theoretical interpretation   introduced by
Bhawalkar et al.~\cite{BhawalkarKLRS12}: each user in the social network pays a cost of $k$ to remain engaged. On the other
hand, he/she receives a profit of one from every neighbor who is engaged. The ``network effects" come into play, and an
individual decides to remain engaged if has non-negative payoff, i.e., it has at least $k$ in-neighbors who are engaged. The
$k$-core can be viewed as the unique maximal equilibrium in this model.

Bhawalkar et al.~\cite{BhawalkarKLRS12} introduced  the \textsc{Anchored $k$-Core} problem for (undirected) graphs. In the
\textsc{Anchored $k$-Core} problem the input is an undirected graph $G=(V,E)$ and integers $b, k$, and the task is to find  an
induced subgraph $H$ of maximum size with all vertices but at most $b$ (which are anchored)
to be of degree at least $k$. In this work we extend the
notion of {anchored $k$-core} to directed graphs.  We are interested in the case, when  in-degrees of all but $b$ vertices of
$H$ are at least  $k$.  More formally, we study the following parameterized version of the problem.
\begin{center}
\noindent\framebox{\begin{minipage}{4.50in} { \textsc{Directed Anchored $k$-Core}  (\DAKC)}\\
\emph{Input}: A directed graph $G=(V,E)$ and integers $b, k,p$. \\
\emph{Parameter~1}: $b$.\\
\emph{Parameter~2}: $k$.\\
\emph{Parameter~3}: $p$.\\
\emph{Question}:  Do there exist sets of vertices $A\subseteq H\subseteq V(G)$ such that $|A|\leq b$, $|H|\geq p$, and every
$v\in H\setminus a$ satisfies $d^{-}_{G[H]}(v)\geq k$?
\end{minipage}}
\end{center}
We will call the set $A$ as the \emph{anchors}, the graph $H$ as the \emph{anchored $k$-core}. Note that the undirected
version of \textsc{Anchored $k$-Core} problem can be modeled by the directed version: simply replace each edge $\{u,v\}$ by
arcs $(u,v)$ and $(v,u)$. Keeping the parameters $b,k,p$ unchanged it is now easy to see that the two instances are
equivalent.

\medskip
\noindent\textbf{Parameterized Complexity:} We are mainly interested in the parameterized complexity of  \textsc{Anchored
$k$-Core}. For the general background, we refer to the books by Downey and Fellows~\cite{downey-fellows-book},  Flum and
Grohe~\cite{flum-grohe-book} and Niedermeier~\cite{niedermeier-book}. Parameterized complexity is basically a two dimensional
framework for studying the computational complexity of a problem. One dimension is the input size $n$ and another one is a
parameter $k$. A problem is said to be \emph{fixed parameter tractable} (or \classFPT) if it can be solved in time $f(k)\cdot
n^{O(1)}$ for some function $f$. A problem is said to be in \classXP, if it can be solved in time $O(n^{f(k)})$ for some
function $f$. The $\operatorClassW$-hierarchy is a collection of computational complexity classes: we omit the technical
definitions here. The following relation is known amongst the classes in the $\operatorClassW$-hierarchy:
$\classFPT=\classW{0}\subseteq W[1]\subseteq W[2]\subseteq \ldots$. It is widely believed that $FPT\neq W[1]$, and hence if a
problem is hard for the class $W[i]$ (for any $i\geq 1$) then it is considered to be fixed-parameter intractable.

\medskip
\noindent\textbf{Previous Results:} Bhawalkar et al.~\cite{BhawalkarKLRS12} initiated the algorithmic  study of
\textsc{Anchored $k$-Core}  on undirected graphs and obtained an interesting dichotomy result:  the decision version of the
problem is solvable in polynomial time for $k\leq 2$ and is  \classNP-complete for all $k\geq 3$. For $k\geq 3$, they also
studied the problem from the viewpoint of parameterized complexity and approximation algorithms.
The current set of authors~\cite{ChitnisFG13} improved and generalized these results by showing that for $k\geq 3$ the problem
remains \classNP-complete even on planar graphs.

\medskip\noindent\textbf{Our Results:} In this paper we provide a number of new results on the algorithmic complexity of
\textsc{Directed Anchored $k$-Core}  (\DAKC).
%
%
We start (Section~\ref{sec:defs}) by showing that that the decision version of \DAKC{} is \classNP-complete for every $k\geq
1$  even if the input graph is restricted to be a planar directed acyclic graph (DAG) of maximum degree at most $k+2$. Note
that this shows that the directed version is in some sense strictly harder than the undirected version since it is known be in
\classP\ if $k\leq 2$, and \classNP-complete if $k\geq 3$~\cite{BhawalkarKLRS12}. The \classNP-hardness result for \DAKC
motivates us to make a more refined analysis of the \DAKC problem via the paradigm of parameterized complexity. In
Section~\ref{sec:saved}, we  obtain the following dichotomy result: \DAKC is \classFPT \, parameterized by $p$ if $k=1$, and
\classW1-hard if $k\geq 2$. This fixed-parameter intractability result parameterized by $p$ forces us to consider the
complexity on special classes of graphs such as bounded-degree directed graphs or directed acyclic graphs. In
Section~\ref{sec:bound-deg}, for graphs of degree upper bounded by $\Delta$, we show that the \DAKC problem is FPT
parameterized by $p+\Delta$ if $k\geq \frac{\Delta}{2}$. In particular, it implies that \DAKC is FPT parameterized by $p$ for
directed graphs of maximum degree at most four. We complement these results by showing in Section~\ref{sec:concl} that  if $k<
\frac{\Delta}{2}$ and $\Delta\geq 3$, then \DAKC is \classW2-hard when parameterized by the number of anchors $b$ even for
DAGs, but the problem is \classFPT\ when parameterized by $\Delta+p$ for DAGs of maximum degree at most $\Delta$. Note that we
can always assume that $b\leq p$, and hence any \classFPT\ result with parameter $b$ implies \classFPT\ result with parameter
$p$ as well. On the other side, any hardness result with respect to $p$ implies the same hardness with respect to $b$.

\section{Preliminaries}\label{sec:defs}
We consider finite directed and undirected graphs without loops or multiple arcs. The vertex set of a (directed) graph $G$ is
denoted by $V(G)$ and its edge set (arc set for a directed graph) by $E(G)$. The subgraph of $G$ induced by a subset
$U\subseteq V(G)$ is denoted by $G[U]$. For $U\subset V(G)$ by $G-U$ we denote the graph $G[V(G)\setminus U]$. For a directed
graph $G$, we denote by $G^*$ the undirected graph with the same set of vertices such that $\{u,v\}\in E(G^*)$ if and only if
$(u,v)\in E(G)$. We say that $G^*$ is the \emph{underlying} graph of $G$.

Let $G$ be a directed graph. For a vertex $v\in V(G)$, we say that $u$ is an \emph{in-neighbor} of $v$ if $(u,v)\in E(G)$. The
set of all in-neighbors of $v$ is denoted by $N_G^-(v)$. The \emph{in-degree} $d_G^-(v)=|N_G^-(v)|$. Respectively, $u$ is an
\emph{out-neighbor} of $v$ if $(v,u)\in E(G)$, the set of all out-neighbors of $v$ is denoted by $N_G^+(v)$, and the
\emph{out-degree} $d_G^+(v)=|N_G^+(v)|$. The \emph{degree} $d_G(v)$ of a vertex $v$ is the sum $d_G^-(v)+d_G^+$, and the
\emph{maximum degree} of $G$ is $\Delta(G)=\max_{v\in V(G)}d_G(v)$. A vertex $v$ of $d_G^-(v)=0$ is called a \emph{source},
and if  $d_G^+(v)=0$, then $v$ is a \emph{sink}. Observe that isolated vertices are sources and sinks simultaneously.

Let $G$ be a directed graph. For $u,v\in V(G)$, it is said that $v$ can be \emph{reached} (or \emph{reachable}) from $u$ if
there is a directed $u\rightarrow v$ path in $G$. Respectively, a vertex $v$ can be reached from a set $U\subseteq V(G)$ if
$v$ can be reached from some vertex $u\in U$. Notice that each vertex is reachable from itself. We denote by $R_G^+(u)$
($R_G^+(U)$ respectively) the set of vertices that can be reached from a vertex $u$ (a set $U\subseteq V(G)$ respectively).
Let $R_G^-(u)$ denote the set of all vertices $v$ such that $u$ can be reached from $v$.

For two non-adjacent vertices $s,t$ of a directed graph $G$, a set $S\subseteq V(G)\setminus\{s,t\}$ is said to be a
\emph{$s-t$ separator} if $t\notin R_{G-S}^+(s)$. An $s-t$ separator $S$ is \emph{minimal} if no proper subset $S'\subset S$
is a $s-t$ separator.

The notion of important separators was introduced by Marx~\cite{Marx06} and generalized for directed graphs
in~\cite{ChitnisHM12}. We need a special variant of this notion. Let $G$ be a directed graph, and let $s,t$ be non-adjacent
vertices of $G$. An minimal $s-t$ separator is an \emph{important $s-t$ separator} if there is no $s-t$ separator $S'$ with
$|S'|\leq |S|$ and $R_{G-S}^-(t)\subset R_{G-S'}^-(t)$. The following lemma is a variant of Lemma~4.1 of~\cite{ChitnisHM12}.
Notice that to obtain it, we should replace the directed graph in Lemma~4.1 of~\cite{ChitnisHM12} by the graph obtained from
it by reversing direction of all arcs.

\begin{lemma}[\cite{ChitnisHM12}]\label{lem:imp-sep}
Let $G$ be a directed graph with $n$ vertices, and let $s,t$ be non-adjacent vertices of $G$. Then for every $h\geq 0$, there
are at most $4^h$ important $s-t$ separators of size at most $h$. Furthermore, all these separators can be enumerated in time
$O(4^{h}\cdot n^{O(1)})$.
\end{lemma}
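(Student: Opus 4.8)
The statement is precisely the directed important-separator bound of~\cite{ChitnisHM12}, phrased ``from the $t$-side'' instead of the source side, so my plan is to make precise the one-line transformation the surrounding text hints at: arc reversal. Let $H$ be the directed graph on vertex set $V(G)$ with $(u,v)\in E(H)$ if and only if $(v,u)\in E(G)$. Then $|V(H)|=n$, $H$ is computable from $G$ in linear time, $s$ and $t$ stay non-adjacent, and since a directed path is reversed edge for edge we have $R_{G-S}^{-}(x)=R_{H-S}^{+}(x)$ for every vertex set $S$ and every vertex $x$. In particular $S$ is an $s$-$t$ separator of $G$ if and only if it is a $t$-$s$ separator of $H$, and minimality as a separator is preserved by this correspondence.

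\noindent\textbf{Matching the definitions.} Rewriting the definition of Section~\ref{sec:defs} with the identity $R_{G-S}^{-}(t)=R_{H-S}^{+}(t)$ (valid for every $S$): a minimal $s$-$t$ separator $S$ of $G$ is important if and only if no $t$-$s$ separator $S'$ of $H$ with $|S'|\le|S|$ satisfies $R_{H-S}^{+}(t)\subsetneq R_{H-S'}^{+}(t)$, i.e.\ if and only if $S$ is an important $t$-$s$ separator of $H$ in the usual source-side sense of~\cite{ChitnisHM12}. Hence the important $s$-$t$ separators of $G$ of size at most $h$ are exactly the important $t$-$s$ separators of $H$ of size at most $h$. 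Applying Lemma~4.1 of~\cite{ChitnisHM12} to $H$ with source $t$ and sink $s$ gives the bound $4^{h}$ and an $O(4^{h}\cdot n^{O(1)})$ enumeration, and the linear-time construction of $H$ carries this back to $G$. The only thing that actually needs checking is the definitional identity, which is immediate, so I do not expect a real obstacle here.

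\noindent\textbf{A self-contained route, for completeness.} If one preferred not to cite, one would split each vertex into an in-copy and an out-copy joined by a unit-capacity arc (with the original arcs given infinite capacity) to turn vertex separators into edge cuts, use the lattice structure of minimum cuts to extract the \emph{unique} minimum $s$-$t$ separator $S_t$ for which $R_{G-S_t}^{-}(t)$ is inclusion-maximal (one max-flow computation), show that every important $s$-$t$ separator $S$ satisfies $S\cap R_{G-S_t}^{-}(t)=\emptyset$ and $R_{G-S_t}^{-}(t)\subseteq R_{G-S}^{-}(t)$, and then branch on a vertex $v\in S_t$: either $v\in S$, and recurse on $(G-v,s,t,h-1)$, or $v\notin S$, and recurse after moving $v$ to the source side, which by uniqueness of $S_t$ strictly raises the minimum $s$-$t$ separator size. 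With the measure $2h-\lambda$, where $\lambda$ is the current minimum $s$-$t$ separator size, dropping by at least one in each branch and no separator output twice, the search tree has at most $4^{h}$ leaves, each requiring only max-flow-type work, which matches the stated bounds. The two delicate points in that route are the extraction of $S_t$ via the cut lattice and the strict increase of $\lambda$ in the second branch; since the arc-reversal reduction inherits both of these from~\cite{ChitnisHM12}, I would present the lemma as nothing more than the reversed form of Lemma~4.1 there.
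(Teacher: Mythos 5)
Your proposal is correct and takes essentially the same route as the paper: the paper itself justifies Lemma~\ref{lem:imp-sep} only by remarking that one should apply Lemma~4.1 of~\cite{ChitnisHM12} to the graph with all arcs reversed, and your argument simply spells out that reversal (an $s$-$t$ separator of $G$ becomes a $t$-$s$ separator of the reversed graph, and $R_{G-S}^-(t)$ becomes $R^+$ from $t$), which is exactly the intended verification. The additional self-contained branching sketch is a faithful outline of the proof in~\cite{ChitnisHM12} but is not needed beyond the citation.
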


As further we are interested in the parameterized complexity of \DAKC, we show first \classNP-hardness of the problem. 

\begin{theorem}\label{thm:NPc}
For any $k\geq 1$, \DAKC is \classNP-complete, even for planar DAGs of maximum degree at
most $k+2$.
\end{theorem}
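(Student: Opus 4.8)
The membership of \DAKC{} in \classNP{} is immediate: a pair of vertex sets $(A,H)$ with $A\subseteq H$ is a certificate verifiable in polynomial time. For the hardness part I would give a polynomial reduction from \sat, first for $k=1$ and then lifting the construction to every $k$ by a local ``thickening''.

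For $k=1$, given a restricted planar $3$-CNF formula $\varphi$ with variables $x_1,\dots,x_n$ and clauses $C_1,\dots,C_m$, I build a directed graph $G$ as follows. Each variable $x_i$ gets a ``true side'' and a ``false side'': a source vertex $T_i$ (resp.\ $F_i$) with a small out-tree ending in the clause vertices of the clauses containing $x_i$ positively (resp.\ negatively), together with a vertex $w_i$ and arcs $T_i\to w_i$, $F_i\to w_i$; the two sides are padded with dummy leaves to have the same number of vertices. Each clause $C_j$ gets a vertex $c_j$ with an incoming arc from the appropriate leaf of the side of each of its literals. Using the bounded number of occurrences per variable and the planar incidence structure of $\varphi$ (which, in the restricted version, may be assumed drawn so that around each variable the positive and the negative occurrences are separated), $G$ is a planar DAG of maximum degree at most $3=k+2$: clause vertices have in-degree $\le 3$ and out-degree $0$, each $w_i$ has in-degree $2$, and all remaining vertices have out-degree $\le 3$ and in-degree $\le 1$.

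Set $b=n$ and choose $p$ so that a feasible $H$ may omit only the vertices of (at most) one of the two sides of each variable gadget — this is exactly what the equal-size padding makes possible. Since every source in $H$ must be anchored, $|A|\le n$ puts at most $n$ of the $2n$ vertices $T_i,F_i$ into $H$; since every $w_i$ must lie in $H$ to meet the bound on omissions, and the only in-neighbours of $w_i$ are $T_i,F_i$, at least one of these is in $H$, hence anchored, for every $i$. Thus $A$ consists of exactly one of $\{T_i,F_i\}$ per $i$ — a consistent truth assignment — and every other vertex of $H$, in particular every $c_j$, is a non-anchor, so $c_j$ has an in-neighbour in $H$ precisely when the assignment satisfies $C_j$; conversely a satisfying assignment yields such an $H$. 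To obtain all $k\ge1$ I thicken $G$: to every vertex $v$ that is required to receive an arc inside $G[H]$ I attach $k-1$ private new in-neighbours $q^v_1,\dots,q^v_{k-1}$, each $q^v_\ell$ itself given $k$ private new source in-neighbours. Then $v$ has in-degree $(k-1)$ plus its former in-degree ($\le(k-1)+3=k+2$ for a clause vertex), each $q^v_\ell$ has in-degree exactly $k$ and out-degree $1$, the additions are pendant planar trees, and $G$ is still a planar DAG of maximum degree $\le k+2$. After increasing $b$ by the number of new private sources and $p$ so that all $q$-vertices and all new private sources are forced into $H$, the analysis goes through with ``at least one in-arc'' replaced by ``at least $k$ in-arcs'': keeping every $q$-vertex in $H$ forces every private source to be anchored, which exhausts the enlarged budget except for exactly $n$ anchors, again split one-per-variable among the $T_i,F_i$.

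The delicate point, for every $k$, is the numerical bookkeeping behind $b$ and $p$. It must leave \emph{no} slack either to anchor a second vertex of some pair $\{T_i,F_i\}$ — which would let an inconsistent assignment ``satisfy'' an unsatisfiable $\varphi$ — or to drop one of the thickening vertices $q^v_\ell$ and recover the lost in-degree of $v$ by routing in extra literal-arcs, a substitution that would again decouple $G$ from $\varphi$. Both are prevented by taking $p$ equal to $|V(G)|$ minus exactly the number of vertices an honest solution discards, so that every remaining vertex is forced into $H$. Carrying this accounting out while simultaneously guaranteeing that $G$ stays planar, acyclic, and of maximum degree at most $k+2$ for every $k\ge1$ is the bulk of the work; everything else is routine.
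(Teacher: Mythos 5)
Your overall strategy is the same as the paper's: reduce from a restricted planar version of {\sc Satisfiability}, force one anchored choice per variable among a pair of source vertices, let each clause vertex need one ``literal'' in-neighbour, and lift to general $k$ by attaching $k-1$ extra in-neighbours, each supported by $k$ fresh anchored sources, which also gives the degree bound $k+2$. The genuine gap is the step you explicitly defer as ``numerical bookkeeping'': with your gadgets it is not routine, and setting $p=|V(G)|$ minus the honest discard count does \emph{not} force the structure you claim. In your construction the honest discard unit is an entire side of a variable gadget (at least two vertices: $T_i$ or $F_i$ plus its out-tree), whereas un-anchoring the $k$ private sources of a single thickening vertex $q^v_\ell$ of a clause vertex discards only $k+1$ vertices and frees $k$ anchors. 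For every $k\ge 2$ a cheating solution can drop one such $q$-gadget of a size-3 clause, re-invest the freed anchors to anchor \emph{both} $T_i$ and $F_i$ of several variables (each such variable then discards nothing, saving at least two vertices per extra anchor), and give the affected clause vertex two literal in-neighbours instead of one; the total number of discarded vertices does not increase, so $|V(H)|\ge p$ is met while the extracted ``assignment'' is inconsistent. Concretely, for the unsatisfiable formula $(x_1)\wedge(\overline{x}_1)\wedge R$ with $R$ satisfiable, this produces a YES-instance of \DAKC, so the reduction as described is unsound for $k\ge 2$. Even for $k=1$ there is a smaller hole: you pad the two sides of each variable to equal size, but sizes still differ \emph{across} variables (one versus two occurrences per polarity), and anchoring both sides of a large variable while abandoning a small one can again fit the discard budget; this one is fixable by padding all gadgets to a common size, but it illustrates that the accounting is exactly where the proof lives.

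The paper's construction avoids the problem by making every anchor ``worth'' exactly one vertex: the literal vertices $x_i,\overline{x}_i$ are themselves sources with arcs going directly to the at most two clause vertices of their polarity and to a common vertex $r_i$ (no out-trees, no padding; degrees stay at most $3\le k+2$), and the boosters $Y_i,Z_i$ and $U_j,W_j$ are sized so that $|V(G)|-p=n$ exactly while the number of sources is $b+n$. Then any feasible solution must anchor at least $|S|-n=b$ sources, hence all anchors are sources and all non-sources lie in $H$; this forces $Z_i\cup W_j\subseteq A$ and exactly one of each pair $x_i,\overline{x}_i$ to be anchored, and the trade described above is impossible because exchanging a booster source for a literal vertex never decreases the discard count. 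To repair your write-up you would need to redesign the variable gadget along these lines (single-vertex discard units), not merely tune $b$ and $p$.
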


\begin{proof}
We reduce {\sc Satisfiability}:
\begin{center}
\noindent\framebox{\begin{minipage}{4.50in}{Satisfiability}\\
\emph{Input}: Sets of Boolean variables $x_1,\ldots,x_n$ and clauses $C_1,\ldots,C_m$. \\
\emph{Question}: Can the formula $\phi=C_1\vee\ldots\vee C_m$ be satisfied?
\end{minipage}}
\end{center}
It is known (see e.g.~
\cite{DahlhausJPSY94}) that this problem remains \classNP-hard even if each clause contains at most 3 literals (notice that clauses of size one or two are allowed), each variable
is used in at most 3 clauses: at least once in positive and at least once in negation, and the graph that correspond to a
boolean formula is planar. Consider an instance of {\sc Satisfiability} with $n$ variables $x_1,\ldots,x_n$ and $m$ clauses
$C_1,\ldots,C_m$ that satisfies these restrictions on planarity and the number of occurrences of the variables. We construct the graph $G$
as follows.

\begin{figure}[ht]
\centering\scalebox{0.7}{\input{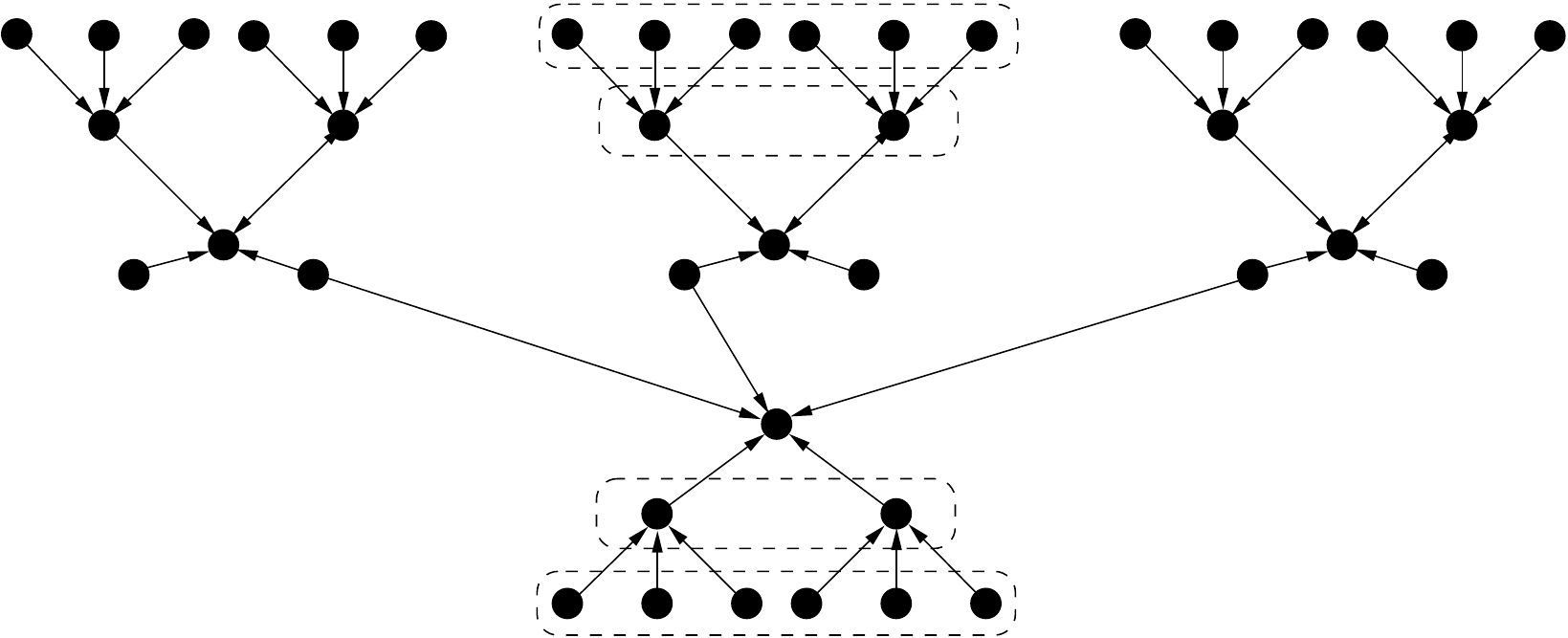_t}}
\caption{Construction of $G$ for $k=3$.
\label{fig:NPh}}
\end{figure}

\begin{itemize}
\item For each $i\in\{1,\ldots,n\}$,
\begin{itemize}
\item add vertices $x_i,\overline{x}_i,r_i$ and add arcs $(x_i,r_i),(\overline{x}_i,r_i)$;
\item add a set of $k-1$ vertices $Y_i$ and draw an arc from each of them to $r_i$;
\item for each vertex $y\in Y_i$, add $k$ vertices and draw an arc from each of them to $y$, denote the set of these
    $k(k-1)$ vertices $Z_i$.
\end{itemize}
\item For each $j\in\{1,\ldots,j\}$,
\begin{itemize}
\item add a vertex $v_j$, and for each literal $x_i$ ($\overline{x}_i$ respectively) in the clause $C_j$, join the
    vertex $x_i$ ($\overline{x}_i$ respectively) with $v_j$ by an arc;
\item add a set of $k-1$ vertices $U_j$ and draw an arc from each of them to $v_j$;
\item for each vertex $u\in U_j$, add $k$ vertices and draw an arc from each of them to $u$, denote the set of these
    $k(k-1)$ vertices $W_j$.
\end{itemize}
\end{itemize}
Notice that if $k=1$, then $Y_i=Z_i=U_j=W_j=\emptyset$. The construction of $G$ is shown in Fig.~\ref{fig:NPh}. We set
$b=n(k(k-1)+1)+mk(k-1)$ and $p=n((k+1)(k-1)+2)+m((k+1)(k-1)+1)$. It is straightforward to see that $G$ is acyclic. Because
each variable $x_i$ is used at most 2 times in positive and at most 2 times in negations, $d_G(x_i), d_G(\overline{x}_i)\leq
3$ for all $i\in\{1,\ldots,n\}$, and $\Delta(G)\leq k+2$. 
Because the graph of the boolean formula is a subcubic planar graph,  
$G$ is planar.

We claim that all clauses $C_1,\ldots,C_m$ can be satisfied if and only if there are a set $A\subseteq V(G)$
and an induced subgraph $H$ of $G$ such
that $A\subseteq V(H)$, $|A|\leq b, |V(H)|\geq p$, and for every $v\in V(H)\setminus A$, we have $d^{-}_{H}(v)\geq k$.

Suppose that we have a YES-instance of {\sc Satisfiability} and consider a truth assignment of $x_1,\ldots,x_n$ such that
all clauses are satisfied. We construct $A$ by including all the vertices $Z_1\cup\ldots\cup Z_n\cup W_1\cup\ldots\cup W_m$ in
this set, and for each $i\in\{1,\ldots,n\}$, if $x_i=\text{true}$, then $x_i$ is included in $A$ and $\overline{x}_i$ is
included otherwise. Clearly, $|A|=|Z_1|+\ldots+|Z_n|+|W_1|+\ldots+|W_m|+n=n(k(k-1)+1)+mk(k-1)=b$. Let $H=G[A\cup
Y_1\cup\ldots\cup Y_n\cup U_1\cup\ldots U_m\cup\{r_1,\ldots,r_n\}\cup\{v_1,\ldots,v_m\}]$. Consider $w\in V(H)\setminus A$. If
$w\in Y_i$ for $i\in\{1,\ldots,n\}$, then $w$ has $k$ in-neighbors in $Z_i\subseteq A$. If $w=r_i$ for $i\in\{1,\ldots,n\}$,
then $w$ has $k-1$ in-neighbors in $Y_i$ and either $x_i$ or $\overline{x}_i$ is an in-neighbor of $w$ as well. If $w\in U_j$
for $j\in\{1,\ldots,m\}$, then $w$ has $k$ in-neighbors in $W_j\subseteq A$. Finally, if $w=v_j$ for some
$j\in\{1,\ldots,m\}$, then $w$ has $k-1$ in-neighbors in $U_j$. As the clause $C_j$ is satisfied, it contains a literal $x_i$
or $\overline{x_i}$ that has the value $\text{true}$. Then by the construction of $A$, the corresponding vertex $x_i$ or
$\overline{x}_i$ respectively is in $A$, and $w$ has one in-neighbor in $A$. It remains to observe that
$|V(H)|=|A|+|Y_1|+\ldots+|Y_n|+|U_1|+\ldots+|U_m|=n(k(k-1)+1)+mk(k-1)+k(n+m)=p$.

Assume now there are a set $A\subseteq V(G)$
and an induced subgraph $H$ of $G$ such
that $A\subseteq V(H)$, $|A|\leq b, |V(H)|\geq p$ and for every $v\in V(H)\setminus A$ we have $d^{-}_{H}(v)\geq k$.

Let $S=\{w\in V(G)\ |\ d_G^-(w)=0\}=(\cup_{i=1}^n\{x_i,\overline{x_i}\})\cup (\cup_{i=1}^nZ_i)\cup(\cup_{j=1}^mW_j)$ and
$T=V(G)\setminus S=\{r_1,\ldots,r_n\}\cup (\cup_{i=1}^nY_i)\cup(\cup_{j=1}^mU_j)$. 
 We claim that
$A\subseteq S$ and $T\subseteq V(H)$. To show it, observe that any vertex $w\in S$ is in $H$ if and only if $w\in A$ as
$d_G^-(w)=0$. Because $|V(G)|-|V(H)|\leq n$, at least $|S|-n$ vertices of $S$ are in $A$. Since $|S|=b+n$, we conclude that
exactly $b=|S|-n$ vertices of $S$ are in $A$ and $A\subseteq S$. Moreover, $V(H)=T\cup A$.

Let $z\in Z_i$ for some $i\in\{1,\ldots,n\}$ and assume that $z$ is adjacent to $y\in Y_i$. If $z\notin A$, then $y\in T$ has
at most $k-1$ in-neighbors in $H$, a contradiction. Hence, $Z_1\cup\ldots\cup Z_n\subseteq A$. By the same arguments we
conclude that $W_1\cup\ldots\cup W_m\subseteq A$. Then we have exactly $n$ elements of $A$ in
$\cup_{i=1}^n\{x_i,\overline{x}_i\}$. Consider a pair of vertices $x_i,\overline{x}_i$ for $i\in\{1,\ldots,n\}$. If
$x_i,\overline{x}_i\notin A$, then $r_i\in T$ has at most $k-1$ in-neighbors in $H$, a contradiction. Therefore, for each
$i\in\{1,\ldots,n\}$, exactly one vertex from the pair  $x_i, \overline{x}_i$ is in $A$. For $i\in\{1,\ldots,n\}$, we set the
variable $x_i=\text{true}$ if the vertex $x_i\in A$, and $x_i=\text{false}$ otherwise.

It remains to prove that we have a satisfying truth assignment. Consider a clause $C_j$ for $j\in\{1,\ldots,m\}$. The vertex
$v_j\in T$ has $k-1$ in-neighbors in $H$ that are vertices of $T$. Hence, it has at least one in-neighbor in $A$. It can be
either a vertex $x_i$ or $\overline{x}_i$ that correspond to a literal in $C_j$. It is sufficient to observe that if $x_i\in
A$, then the literal $x_i=\text{true}$, and if  $\overline{x}_i\in A$, then the literal $\overline{x}_i=\text{true}$ by our
assignment.
\end{proof}

We conclude this section by the simple observation that \DAKC is in \classXP\ when parameterized by the number of anchors $b$.
For a directed graph $G$ with $n$ vertices, we can consider all the at most $n^b$ possibilities to
choose the anchors, and then recursively delete non-anchor vertices that have the in-degree at most $k-1$. Trivially, if we
obtain a directed graph with at least $p$ vertices for some selection of the anchors, we have a solution and otherwise we can
answer NO.

\section{\DAKC parameterized by the size of the core}
\label{sec:saved}
\vskip-2mm
In this section we consider the \DAKC problem for fixed $k$ when $p$ is a parameter and obtain the following dichotomy: If
$k=1$ then the \DAKC problem is FPT parameterized by $p$, otherwise for $k\geq 2$ it is \classW1-hard parameterized by $p$.

\begin{theorem}\label{thm:fpt-d-1}
For $k=1$, the \DAKC problem is solvable in time $2^{O(p)}\cdot n^2\log n$ on digraphs with $n$ vertices.
\end{theorem}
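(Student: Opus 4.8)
The plan is to reduce the problem to finding a small set of ``roots'' from which one can reach many vertices along directed paths, and then to apply a color-coding / random-separation argument to bound the search. The key structural observation for $k=1$ is the following: a vertex $v\neq A$ survives in $H$ iff it has at least one in-neighbor in $H$. Iterating this, a non-anchor vertex can be in the core only if it is reachable in $G[H]$ from an anchor by a directed path. Conversely, if we pick any set $A$ of anchors, the \emph{maximal} legal core containing $A$ is obtained by the cascading process: keep deleting non-anchor vertices of in-degree $0$; what remains is exactly $R^+_{G'}(A)$ for a suitable subgraph, but more usefully, one can show that the largest core obtainable with anchor set $A$ is the set of vertices lying on directed paths that start at $A$ and use only vertices that are themselves reachable from $A$ — which, since reachability is monotone, is just $R^+_{G}(A)$ restricted appropriately. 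So the task becomes: is there a set $A$ with $|A|\le b$ such that the ``closure'' of $A$ under the survival rule has size $\ge p$? Since $b\le p$, it suffices to look for a witnessing substructure of size $\le p$.

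Concretely, I would argue that a YES-instance admits a solution of a very restricted shape: a forest-like ``in-branching'' on at most $p$ vertices. If $(A,H)$ is a solution with $|V(H)|\ge p$, shrink $H$ to exactly $p$ vertices by repeatedly removing a non-anchor sink (a vertex with out-degree $0$ inside the current subgraph, which exists unless everything is an anchor, but $|A|\le b\le p$); removing a sink cannot destroy the in-degree condition of any other vertex. Then inside the resulting $p$-vertex core, every non-anchor vertex chooses one in-neighbor as its ``parent''; following parents backwards must terminate at an anchor (no cycles, since a cycle would be a set of non-anchors each blaming the next, and the cascading-deletion argument shows such a closed chain contributes nothing — more carefully, one reroutes to avoid cycles). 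This yields a subgraph on $\le p$ vertices that is a disjoint union of out-trees (arborescences) rooted at the $\le b$ anchors, using arcs of $G$. So the problem is equivalent to: does $G$ contain, as a subgraph, a union of out-trees with at most $b$ roots and exactly $p$ vertices total?

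For that, I would use color-coding: randomly color $V(G)$ with $p$ colors, and look for a \emph{colorful} such out-forest (one vertex of each color), which can be found by dynamic programming over color subsets. The standard DP: for each vertex $v$ and each color subset $S\ni \mathrm{col}(v)$, compute whether there is an out-tree rooted at $v$ whose vertex set is colorful and uses exactly the colors in $S$; the recurrence splits $S$ at $v$ among the out-neighbors chosen as children. Then combine at most $b$ such rooted trees with disjoint color sets summing to all $p$ colors. The running time is $2^{O(p)}\cdot n^{O(1)}$, and derandomization via an $(n,p)$-perfect hash family keeps it $2^{O(p)}\cdot n^{O(1)}$; a careful implementation of the DP (processing arcs rather than all subsets of out-neighborhoods, and using subset-sum convolution over the color lattice) brings the polynomial factor down to $n^2\log n$, matching the claimed bound.

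The main obstacle I anticipate is the structural reduction in the second paragraph: showing that an arbitrary solution $(A,H)$ can be replaced by a clean out-forest on exactly $p$ vertices rooted at the anchors. One has to handle (i) trimming $H$ down to size exactly $p$ without losing feasibility — safe because deleting a non-anchor sink preserves all in-degrees — and (ii) the possibility that the ``parent'' pointers create directed cycles among non-anchors; here one argues that a vertex on such a cycle, together with everything whose only support traces back into the cycle, can be removed (the cascading process would delete it if the cycle's ``external'' support were cut), so one may assume the parent function is acyclic and hence defines a forest of arborescences with roots in $A$. Once this normal form is established, the color-coding step is routine.
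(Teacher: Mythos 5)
There is a genuine gap in your structural reduction: for $k=1$ a directed cycle is self-sustaining, and your argument dismisses exactly this case. Every vertex of a directed cycle has in-degree $1$ inside the cycle, so the cascading process (repeatedly deleting non-anchor vertices of in-degree $0$) never touches it; contrary to your claim in (ii), a closed chain of non-anchors ``blaming'' each other is a perfectly legal part of the core and contributes vertices for free, without spending any anchor budget. Consequently your key claim --- that every non-anchor core vertex is reachable inside $H$ from an anchor, and hence that a YES-instance always admits a solution shaped as a union of out-trees rooted at at most $b$ anchors --- is false. A concrete counterexample: let $G$ be a directed cycle on $p$ vertices and $b=0$; the answer is YES (take $H=G$, $A=\emptyset$), but there is no anchored out-forest at all, so your color-coding search would report NO. The correct statement is that a non-anchor core vertex must be reachable in $H$ from an anchor \emph{or} from a directed cycle of $H$, and the cycle case cannot be normalized away.

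The paper's proof handles precisely this point with a preprocessing rule: compute the strongly connected components, let $R$ be the set of vertices reachable from them, observe that every vertex of $R$ has in-degree at least $1$ in $G[R]$ (so $R$ is free), and either finish immediately if $b\geq p-|R|$ or delete $R$ and decrease $p$ by $|R|$. Only after this does it work on a DAG, where anchors may be assumed to be sources and the problem reduces to \psc, solved by Bl\"aser's $2^{O(p)}$ algorithm; note the paper does not need color coding here at all. Your approach could be repaired by adding the same SCC-based preprocessing, after which your out-forest characterization (and the colorful-tree dynamic program) is valid on the resulting DAG, though the claimed $n^2\log n$ polynomial factor for the subset-convolution DP would still need to be justified rather than asserted. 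As written, however, the normalization step on which the whole algorithm rests does not hold.
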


\begin{proof}
The proof is constructive, and we describe an \classFPT~ algorithm for the problem. Without loss of generality, we assume that
$b< p\leq n$.

We apply the following preprocessing rule reducing the  instance to  an acyclic graph. Let  $C_1,\ldots,C_r$ be strongly
connected components of $G$. By making use of Tarjan's algorithm~\cite{Tarjan72}, the sets  $C_1,\ldots,C_r$ can be found  in
linear time.  Let $R=R_G^{+}\Big(\bigcup_{i=1}^rV(C_i)\Big)$ be the set of vertices reachable from strongly connected
components. Then every $v\in R$ satisfies $d_{G[R]}^-(v)\geq 1$. If $b\geq p-|R|$, then we select in $V(G)\setminus R$ any
arbitrary $b'=p-|R|$ vertices $a_1,\ldots,a_{b'}$. In this case we  output the set of anchors $A=\{a_1,\ldots,a_{b'}\}$ and
graph $H=G[A\cup R]$. Otherwise, if $b< p-|R|$,  we set $G'=G-R$ and $p'=p-|R|$ and consider a new instance of
\DAKC  with the graph $G'$ and the parameter $p'$.

To see that the rule is safe, it is sufficient to observe that a set of anchors $A$ and a subgraph $H'$ of size at least $p'$
is a solution of the obtained instance if and only if $(A,H=G[V(H')\cup R])$ is a solution for the original problem. Let us
remark that the preprocessing rule can be easily  performed in  time $O(n^2)$.

From now we can assume that $G$ has no strongly connected components, i.e., $G$ is a directed acyclic graph. Denote by
$S=\{s_1,\ldots,s_h\}$ the set of sources of $G$. If $|S|\leq b$, then set $A=S$. In this case, we output the pair $(A,H=G)$.
The pair $(A,H)$ is a solution because every vertex $v\in V(G)\setminus S$ satisfies $d_G^-(v)\geq 1$. It remains to consider
the case when $|S|>b$. For $i\in\{1,\ldots,h\}$, let $R_i=R_G^+(s_i)$. Then $V(G)=R_G^+(S)=\bigcup_{i=1}^hR_i$. Without loss
of generality, we can assume that every anchored vertex is from $S$. Indeed, if $s_i$ is an anchor, then each vertex of $R_i$
can be included in a solution. Hence for every anchor $a\in R_j\setminus \{s_j\}$, we can delete anchor from $a$ and anchor
$s_j$, if it is not yet anchored.
Since we  can choose anchors only from $S$, we are able to reduce the problem to {\sc Partial Set Cover}.

\begin{center}
\noindent\framebox{\begin{minipage}{4.50in}{\psc }\\
\emph{Input }: A collection $X=\{X_1,\ldots,X_r\}$ of subsets of a finite $n$-element set $U$ and positive integers $p, b$. \\
\emph{Parameter}: $p$.\\
\emph{Question}: Are there at most $b$ subsets $X_{i_1},\ldots,X_{i_{b}}$, $1\leq i_1<\ldots<i_{b}\leq r$, covering at least
$p$ elements of $U$, i.e., $|\bigcup_{j=1}^{b} X_{i_j}|\geq p$?
\end{minipage}}
\end{center}

Bl\"{a}ser~\cite{Blaser03} showed that \psc if \classFPT parameterized by $p$ and can be solved in time $O(2^{O(p)}\cdot r
n\log n)$.
For \DAKC, 
we consider the collection of subsets $\{R_1,\ldots,R_r\}$ of $V(G)$. If we can select at most $b$
subsets $R_{i_1},\ldots,R_{i_{b}}$ such that  $|\cup_{j=1}^{b}R_{i_j}|\geq p$, we return the solution with anchors
$A=\{s_{i_1},\ldots,s_{i_{b}}\}$ and $H=G[\bigcup_{j=1}^{b}R_{i_j}]$. Otherwise, we return a NO-answer.

Because our preprocessing can be done in time $O(n^2)$ and  {\sc Partial Set Cover} is solvable in time $2^{O(p)}\cdot n^2\log
n$, we conclude that the total running time is $2^{O(p)}\cdot n^2\log n$.
\end{proof}

Now we complement Theorem~\ref{thm:fpt-d-1} by showing that for $k\geq 2$,  \DAKC becomes hard
parameterized by the core size.

\begin{theorem}\label{thm:w-saved}
For any fixed $k\geq 2$, the \DAKC problem is \classW1-hard parameterized by $p$, even for DAGs.
\end{theorem}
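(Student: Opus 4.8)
The plan is to reduce from a known \classW1-hard problem whose hardness is preserved under parameterization by a quantity that will play the role of $p$. The natural candidate is \textsc{Multicolored Clique} (equivalently \textsc{Multicolored Independent Set}), or alternatively \textsc{Clique} itself, parameterized by the solution size. Given a graph $G$ with a partition of its vertices into $\ell$ color classes $V_1,\dots,V_\ell$, we want to build a directed acyclic graph $G'$ and integers $b,p$ (with $p$ bounded by a function of $\ell$ only) such that $G'$ has an anchored $k$-core of size $p$ with at most $b$ anchors if and only if $G$ has a multicolored clique of size $\ell$. The high-level idea is a standard gadget pattern: have ``vertex-selection'' gadgets, one per color class, and ``edge-verification'' gadgets, one per pair of color classes, wired so that selecting vertices into the core is forced to be consistent and to correspond to a clique.

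Concretely, I would first treat the case $k=2$ and then explain how to pad for larger fixed $k$ (the padding is the same trick used in Theorem~\ref{thm:NPc}: attach to each ``checking'' vertex $k-2$ extra in-neighbors together with their own mandatory in-neighbors so that being in the core already ``costs'' $k-2$ of the required in-degree, leaving exactly two slots to be filled by the genuine gadget arcs). For $k=2$: for each color class $V_i$ create a source vertex $a_i$ that we intend to anchor, and for each $v\in V_i$ create a vertex $x_v$ with arc $(a_i,x_v)$; all these $x_v$ compete but only the anchored $a_i$ feeds them, so each such $x_v$ still needs one more in-neighbor to be in the core without itself being anchored. The trick is to make that second in-neighbor come only from an edge gadget, so that $x_v$ survives in the core only if enough edges incident to $v$ are ``activated''. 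For each unordered pair $\{i,j\}$ and each edge $e=uv$ of $G$ with $u\in V_i,v\in V_j$, create an edge vertex $y_e$ with arcs into appropriate ``port'' vertices that ultimately force: (a) exactly one edge is chosen per color pair, (b) the chosen edges are mutually consistent on endpoints, and (c) consistency propagates to force $\ell$ vertices chosen, one per class. The counting is arranged so that a solution of size exactly $p = f(\ell)$ is attainable iff all $\binom{\ell}{2}$ edge gadgets can be activated consistently, i.e. iff there is a multicolored clique; $b$ is set to the total number of intended anchors, which is again a function of $\ell$ only plus the number of forced sinks/sources that must be anchored, and one checks $b\le p$.

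The main obstacle — and the part that needs the most care — is engineering the edge gadgets so that ``activation'' genuinely enforces endpoint consistency rather than merely a fractional/relaxed count, while keeping the construction acyclic and keeping $p$ (and $b$) bounded purely in terms of $\ell$, not $|V(G)|$. The standard way around this is to route each edge vertex $y_e$ (for $e=uv$, $u\in V_i$) through a ``selection'' vertex for $(i,j)$ whose in-degree-$k$ requirement can be met only by the unique $y_e$ we commit to, and simultaneously have $y_e$ be the needed extra in-neighbor of both $x_u$ and $x_v$; since an edge vertex that is itself in the core must have its own in-degree satisfied, we give $y_e$ exactly $k$ potential in-neighbors that are themselves ``cheap'' (anchored or source-fed) so that including $y_e$ is free once we decide to, but the global budget $b$ on anchors forbids activating more than one edge per pair and more than $\ell$ vertices overall. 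I would verify acyclicity by giving an explicit topological order (anchors/sources first, then $x_v$'s, then $y_e$'s, then sink checkers), and verify both directions of the equivalence by the same bookkeeping argument as in Theorem~\ref{thm:NPc}: in the forward direction exhibit the core from a clique; in the reverse direction argue that the anchor budget and the size requirement force the solution to pick exactly one vertex per class and exactly one edge per pair, with endpoints matching, hence a clique. Finally, to get \classW1-hardness for \emph{every} fixed $k\ge 2$ rather than just $k=2$, apply the padding described above uniformly, adjusting $b$ and $p$ by the appropriate additive functions of $k$ and $\ell$; the DAG structure and the reduction's correctness are unaffected.
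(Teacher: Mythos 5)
There is a genuine gap: the heart of your reduction --- the edge-verification gadgetry that is supposed to enforce endpoint consistency, limit each color pair to one activated edge, and keep both $b$ and $p$ bounded in terms of $\ell$ --- is never actually constructed. You yourself flag it as ``the main obstacle'' and then only gesture at ``the standard way around this''. Worse, the mechanism you do sketch undermines soundness: if every edge vertex $y_e$ is given $k$ ``cheap'' (anchored or source-fed) potential in-neighbors so that ``including $y_e$ is free once we decide to'', then in a NO-instance with many edges one can place arbitrarily many such free vertices in the core and reach size $p$ without any clique, because the anchored $k$-core condition only asks for an induced subgraph of size at least $p$ with at most $b$ anchors --- it does not respect your intended gadget semantics. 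The crucial charging argument, showing that the anchor budget forces every core vertex to be paid for by a selected clique vertex, is exactly what is missing; your claims (a)--(c) (one edge per pair, consistency, one vertex per class) are asserted rather than derived from any concrete construction, and the promised bound $p=f(\ell)$ is never verified against the $x_v$ and $y_e$ vertices that could enter the core.

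For comparison, the paper's proof shows that none of this machinery is needed: it reduces from \textsc{Clique} parameterized by the clique size $b$ directly. The branch vertices (a copy of $V(G)$) and $k-2$ shared vertices $z_1,\ldots,z_{k-2}$ are all sources, so each lies in the core if and only if it is anchored; every edge $\{u,v\}$ gets a sink $w_{uv}$ with arcs from $u$, $v$ and all of $Z$. Setting $b'=b+k-2$ and $p=\binom{b}{2}+b+k-2$, any solution must anchor all of $Z$ and at most $b$ branch vertices, and must contain at least $\binom{b}{2}$ edge sinks, each of which needs both of its endpoints anchored; with only $b$ anchored branch vertices this is possible precisely when they form a clique. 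Your padding idea for $k>2$ is in the same spirit as the shared set $Z$, but to salvage your multicolored-clique route you would have to specify the port gadgets explicitly and prove this counting/charging step; the paper's observation that sources are in the core iff anchored, combined with counting edge sinks coverable by $b$ anchors, already gives Theorem~\ref{thm:w-saved} without any consistency gadgets.
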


\begin{proof}
We reduce from the {\sc $b$-Clique} problem which is known to be \classW1-hard~\cite{downey-fellows-book}:

\begin{center}
\noindent\framebox{\begin{minipage}{4.50in}{$b$-Clique}\\
\emph{Input}: A undirected graph  $G$ and a positive integer $b$. \\
\emph{Parameter}: $b$\\
\emph{Question}: Is there a clique of size $b$ in $G$?
\end{minipage}}
\end{center}

From a given graph $G$ we construct a directed graph $G'$ as follows.
\begin{itemize}
\item Construct a copy of $V(G)$.
\item For each edge $\{u,v\}\in E(G)$, construct a new vertex $w_{uv}$ and join $u,v$  with $w_{uv}$  the copy of $V(G)$
    by arcs $(u,w_{uv})$ and $(v,w_{uv})$.
\item Construct $k-2$ vertices $z_1,\ldots,z_{k-2}$, and for each $e\in E(G)$, join $z_1,\ldots,z_{k-2}$ with $w_e$ by
    arcs.
\end{itemize}
It is straightforward to see that $G'$ is a directed acyclic graph. We say that the vertex $w_{uv}$ for $\{u,v\}\in E(G)$ is a
\emph{subdivision} vertex, and we say that $v\in V(G)$  is a \emph{branch} vertex. Let $b'=b+k-2$ and
$p=\frac{b(b+1)}{2}+k-2$. Let $Z=\{z_1\ldots,z_{k-1}\}$. We claim that $G$ has a clique of size $b$ if and only if there is a
set of at most $b'$ vertices $A\subseteq V(G')$ such that there exists an an induced subgraph $H$ of $G'$ with at least $p$
vertices, $A\subseteq V(H)$ and for any $v\in V(H)\setminus A$ we have  $d_H^-(v)\geq k$.

Suppose that $K$ is a clique in $G$ of size $b$. We let $A=K\cup Z$ and define $U=\{w_{uv}|u,v\in K\}$. Notice that
$|U|=\frac{b(b-1)}{2}$ and each vertex of $U$ has two in-neighbors in $A\cap K$ and $k-2$ in-neighbors in $Z$. We conclude
that $H=G'[A\cup U]$ has $p$ vertices and  for any $v\in V(H)\setminus A$ satisfies $d_H^-(v)\geq k$.

Assume now that there is a set of at most $b'$ vertices $A\subseteq V[G']$ such that there exists an induced subgraph $H$ of
$G'$ with at least $p$ vertices, $A\subseteq V(H)$ and for any $v\in V(H)\setminus A$ we have  $d_H^-(v)\geq k$. Since
subdivision vertices of $G'$ are sinks, we can assume that $A$ contains only \emph{branch} vertices and vertices from $Z$, as
otherwise we can replace an anchor $a\in A$ that is a subdivision vertex of $G'$ by an arbitrary branch vertex or a vertex of
$Z$. Because branch vertices of $G'$ and the vertices of $Z$ are sources, any such vertex $v$ is in $H$ if and only if $v\in
A$. Hence, $H$ has at most $b'$ sources of $G'$ and at least $\frac{b(b-1)}{2}$ subdivision vertices. If there is a vertex
$z_i\in Z$ such that $z_i\notin A$, then each subdivision vertex $w_e$ has at most $k-1$ in-neighbors and $H$ cannot contain
subdivision vertices. Therefore $Z\subseteq A$ and $A$ has at most $b'-(k-2)=b$ branch vertices. It remains to observe that a
subdivision vertex $w_{uv}$ has $k$ in-neighbors in $H$ if and only if $u,v\in A$. Then the claim follows.
\end{proof}

\section{\DAKC on graphs of bounded degree}
\label{sec:bound-deg} 
In this section we show that \DAKC problem is \classFPT\ parameterized by $\Delta+p$ if
$k\geq\frac{\Delta}{2}$.

In our algorithms we need to check the existence of solutions for \DAKC that have bounded size. It can be observed that if we
are interested in solutions $(A,H)$ such that $p\leq |V(H)|\leq q$, then for every positive $q$, we can express this problem
in the first order logic. It was proved by Seese~\cite{Seese96} that any graph problem expressible in the first-order logic
can be solved in linear time on (directed) graphs of bounded degree. Later this result was extended for much more rich graph
classes (see~\cite{DvorakKT10} ). These meta theorems are very general, but do not provide
good upper bounds for running time for particular problems. Hence, we give the following lemma.
Our algorithms use the random separation technique due to Cai et al.~\cite{CaiCC06} (which is a variant of the color coding
method introduced by Alon et al.~\cite{AlonYZ95}) .

\begin{lemma}\label{lem:bounded}
There is a randomized algorithm with running time $2^{O(\Delta q)}\cdot n$ that for an instance of \DAKC with an $n$-vertex
directed graph of maximum degree at most $\Delta$ and a positive integer $q\geq p$, either returns a solution $(A,H)$ with $V(H)\geq p$ or gives the answer
that there is no solution with $|V(H)|\leq q$. Furthermore, the algorithm can be derandomized, and the deterministic variant
runs in time  $2^{O(\Delta q)}\cdot n\log n$.
\end{lemma}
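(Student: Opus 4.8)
The plan is to use the random separation technique of Cai, Chan, and Chan to isolate a small solution from the rest of the graph, then find the best such solution within each connected piece.

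\textbf{Setup via random separation.}
First I would observe that if a solution $(A,H)$ with $|V(H)| \le q$ exists, we may fix one and try to "capture" it. Color every vertex of $G$ independently and uniformly at random with one of two colors, \emph{green} and \emph{red}. Call the coloring \emph{good for $H$} if every vertex of $V(H)$ is green and every vertex of $N_G^-(V(H)) \cup N_G^+(V(H)) \setminus V(H)$ — i.e.\ the open out/in-neighbourhood of $H$ outside $H$ — is red. Since $\Delta(G) \le \Delta$ and $|V(H)| \le q$, the number of vertices that must receive a prescribed color is at most $q(\Delta+1)$, so a random coloring is good for $H$ with probability at least $2^{-q(\Delta+1)} = 2^{-O(\Delta q)}$. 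Hence repeating $2^{O(\Delta q)}$ times, with high probability some coloring is good for the fixed solution.

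\textbf{Searching inside one coloring.}
Given a coloring, let $G_{\mathrm{green}}$ be the subgraph of $G$ induced by the green vertices, and consider its connected components (in the underlying undirected sense) $D_1, D_2, \dots$. The key point: if the coloring is good for $H$, then $V(H)$ is exactly a union of some of these components that are \emph{entirely} contained in $V(H)$ — because any green vertex adjacent (in $G^*$) to a vertex of $V(H)$ would have to be red unless it lies in $V(H)$, so no component meeting $V(H)$ can "leak out." Moreover, for such a component $D_i \subseteq V(H)$ and any non-anchor $v \in D_i$, all in-neighbours of $v$ that lie in $H$ already lie in $D_i$ (the red in-neighbours are not in $H$ at all). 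Therefore, within each component $D_i$ of size at most $q$, we can independently decide: is there a choice of anchors $A_i \subseteq V(D_i)$ with $d^-_{D_i}(v) \ge k$ for all $v \in V(D_i)\setminus A_i$? The minimum such $|A_i|$ is found by simply anchoring exactly the vertices of $D_i$ whose in-degree within $D_i$ is less than $k$ (this is forced and sufficient, since adding a component wholesale cannot change in-degrees inside it). Call a component \emph{feasible} if this minimum anchor count, $a(D_i) := |\{v \in V(D_i) : d^-_{D_i}(v) < k\}|$, is at most $b$; but more usefully we want to combine components. We then solve the following simple optimization: among all green components $D_i$ with $|V(D_i)| \le q$, select a subcollection maximizing $\sum |V(D_i)|$ subject to $\sum a(D_i) \le b$. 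If the optimum is $\ge p$, output the corresponding $H$ (union of chosen components) and $A$ (union of the forced anchor sets); if it never reaches $p$ over all colorings, answer that no solution with $|V(H)| \le q$ exists.

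\textbf{The combinatorial subproblem and running time.}
The selection step is a knapsack-like problem, but note each chosen component has size at most $q$, so we never need more than $q$ components of size $\ge 1$... actually we need at most $p \le q$ of them; more simply, sort components and greedily/DP-select. Since we only care whether the total reaches $p$ and $b \le p \le q$, a dynamic program over "anchor budget used" (values $0,\dots,b$) running in $O(n \cdot b)$ suffices, or one can observe that components needing $0$ anchors are always taken and among the rest we pick by best size-to-anchor ratio up to the budget; either way it is polynomial, dominated by the $n$ factor. Computing the components and the values $a(D_i)$ is linear in the size of $G_{\mathrm{green}}$, hence $O(n)$ since $\Delta$ is bounded. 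So each iteration costs $O(n)$ (treating $\Delta, q$ as controlling the hidden constant only through the number of iterations), and the total randomized running time is $2^{O(\Delta q)} \cdot n$, as claimed. For derandomization, replace the sequence of random colorings by an $(n, q(\Delta+1))$-universal set (a family of colorings such that every set of $q(\Delta+1)$ coordinates realizes all $2^{q(\Delta+1)}$ patterns), which by Naor–Schulman / the construction cited via Cai et al.\ has size $2^{O(\Delta q)} \log n$ and is computable in time $2^{O(\Delta q)} n \log n$; iterating over it gives the deterministic bound $2^{O(\Delta q)} \cdot n \log n$.

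\textbf{Main obstacle.}
The delicate point is the "no leakage" argument: one must be careful that a good coloring forces $V(H)$ to be a \emph{union of whole} green components and that anchoring is decided component-by-component. The subtlety is that a non-anchor $v \in V(H)$ needs $d^-_{G[H]}(v) \ge k$, and $G[H]$ may in principle contain arcs between vertices of $V(H)$ that sit in different green components — but it cannot, precisely because any two green vertices joined by an arc lie in the same underlying-connected green component by definition, so $G[H]$ restricted to a component $D_i$ equals $D_i$ itself and in-degrees do not interact across components. Getting this statement and its converse (that the union of feasible components with total anchor cost $\le b$ and total size $\ge p$ is genuinely a solution) stated cleanly is the heart of the proof; everything else is bookkeeping and the standard universal-set derandomization.
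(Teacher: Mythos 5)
Your proof is correct and follows essentially the same route as the paper's: random separation with a coloring that isolates $H$ together with its outside in-/out-neighbourhood, decomposition into weakly connected components of the favourably coloured subgraph with the forced anchor set $\{v : d^-_{D_i}(v)<k\}$ per component, a knapsack-style dynamic program over the anchor budget, and the standard hash-family/universal-set derandomization. The only caveat is your parenthetical alternative of picking components greedily by size-to-anchor ratio, which does not solve the knapsack selection optimally and should be dropped in favour of the dynamic program you already describe (which is what the paper uses).
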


\begin{proof}
Consider an instance of \DAKC with an $n$-vertex directed graph $G$ of maximum degree at most $\Delta$. We assume that $b\leq
p\leq n$. For given $q\geq p$, to decide if $G$ contains a solution of size at  most $q$, we do the following.

We color each vertex of $G$ uniformly at random with probability $\frac{1}{2}$ by one of  two colors, say red or
blue. Let  $R$ be the set of vertices  colored red. Observe that if there is a solution $(A,H)$ with $|V(H)|\leq q$, then with
probability at least $\frac{1}{2^q}$ all vertices of $H$ are colored red and with probability at least $\frac{1}{2^{\Delta q}}$
all 
in- and out-neighbors of the vertices of $H$ that are outside of $H$ are colored blue. Using this observation, we assume that
$H$ is the union of some weakly connected components of  the graph $G[R]$ induced by red vertices.

In time $O(\Delta n)$ we find all weakly connected components of $G[R]$. If there is a component $C$ with at least $b+1$
vertices of in-degree at most $k-1$ (in $C$), then we discard this component as it cannot be a part of any solution. Denote by
$C_1,\ldots,C_r$ the remaining components. For $i\in\{1,\ldots,r\}$, let $A_i=\{v\in V(C_i)|d_{C_i}^-(v)<k\}$, $b_i=|A_i|$ and
$p_i=|V(C_i)|$.

Thus everything boils down to the problem of finding a set $I\subseteq \{1,\ldots,r\}$ such that $\sum_{i\in I} b_i\leq b$ and
$\sum_{i\in I}p_i \geq p$.  But this is  the well known {\sc Knapsack} problem, which  is solvable in time $O(bn)$ by dynamic
programming. If we obtain a solution $I$, then we output $(A,H)$, where $A=\cup_{i\in I} A_i$ and $H=G[\cup_{i\in I}V(C_i)]$.
Otherwise, we return a NO-answer. Notice that this algorithm can also find a solution $(A,H)$  with $|V(H)|> q \geq p$.

It remains to observe that for any positive number $\alpha<1$, there is a constant $c_{\alpha}$ such that after running our
randomized algorithm $c_{\alpha}\cdot 2^{\Delta q}$ times, we either find a solution $(A,H)$ or can claim that with
probability $\alpha$ that it does not exist.

This algorithm can be derandomized by the  technique proposed by Alon et al.~\cite{AlonYZ95}: replace the random colorings by
a family of at most $2^{O(\Delta q)}\cdot \log n$ hash functions which are known to be constructible  in time $2^{O(\Delta
q)}\cdot n\log n$.
\end{proof}

Our next aim is to prove that for $k>\Delta/2$ the \DAKC problem is \classFPT\ when parameterized by the number of anchors
$b$.

\begin{lemma}\label{lem:bound-deg-anchors}
Let $\Delta$ be a positive integer. If $k>\Delta/2$, then the \DAKC problem can be solved in time $2^{O(\Delta^2 b)}\cdot
n\log n$ for $n$-vertex directed graphs of maximum degree at most $\Delta$.
\end{lemma}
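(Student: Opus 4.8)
The plan is to peel off the $k$-core of $G$ first, and then to control, via important separators, the (unboundedly large) part of a solution that the anchors can ``switch on''.

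\emph{Step 1: reduction to a core-free instance.} In time $O(\Delta n)$ compute the $k$-core $K$ of $G$, i.e.\ the unique maximal induced subgraph of minimum in-degree at least $k$, by iteratively deleting vertices of in-degree $<k$. Since $K\subseteq V(H)$ for every solution $(A,H)$ and an anchor lying in $K$ can always be discarded, we may assume $A\cap K=\emptyset$; and for a fixed $A$ the largest anchored $k$-core is exactly the closure of $K\cup A$ under the monotone threshold-$k$ activation process (a vertex activates once at least $k$ of its in-neighbours are active). Two easy regimes are handled at once: if $n-|K|\le b$, anchor all of $V(G)\setminus K$ and answer \textsc{yes} iff $n\ge p$; if $p\le |K|+\min\{b,n-|K|\}$, anchor any $b$ vertices outside $K$ and answer \textsc{yes}. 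Hence we may assume $p\ge |K|+b+1$, so that in any solution more than $b$ vertices outside $K$ are activated and at least one of them is a non-anchor. Since any prefix of the activation order is again a valid anchored $k$-core, it moreover suffices to search for a solution of size exactly $p$.

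\emph{Step 2: structure forced by $k>\Delta/2$.} Fix a solution $(A,H)$ with $A\cap K=\emptyset$, put $X=V(H)\setminus K$, and for $v\in X$ write $\delta(v)=|N_G^-(v)\cap K|$. Every non-anchor $v\in V(H)$ satisfies $d_G^-(v)\ge d_H^-(v)\ge k$, hence $d_G^+(v)\le \Delta-k<k$; moreover $v$ has at most $\Delta-k$ in-neighbours outside $H$, and $\delta(v)\le k-1$ (otherwise $K\cup\{v\}$ would already be a $k$-core). Consequently every non-anchor of $X$ has at least one in-neighbour in $X$ that is deleted earlier than $v$ when $H$ is peeled down to $K$; fixing one such in-neighbour as a parent exhibits $X$ as an out-forest whose roots are exactly the anchors, with out-degree at most $\Delta-k$ at each non-anchor and at most $\Delta$ at the $\le b$ roots. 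In particular $X\subseteq R_G^+(A)$, and running the activation process in $G-K$ with the reduced thresholds $k-\delta(v)\ge 1$ from the empty set activates nothing (such a nonempty set together with $K$ would be a larger $k$-core): the whole of $X$ is generated by $A$.

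\emph{Step 3: the obstruction and the use of important separators.} The forest of Step~2 may have arbitrarily long directed paths, and when $\Delta-k\ge 2$ arbitrarily large subtrees below a single anchor (pendant structures whose vertices draw $k-1$ in-neighbours directly from $K$ and one from their predecessor); so $|V(H)|$ admits \emph{no} bound in $b$ and $\Delta$, and Lemma~\ref{lem:bounded} does not apply directly. This is where Lemma~\ref{lem:imp-sep} enters: for each anchor $a$, the set of vertices that $a$ contributes to the closure, and the way this set attaches to the rest of $G$, is pinned down by a directed separator of size $O(\Delta)$ between $a$'s region and $V(G)$ minus it, and only $4^{O(\Delta)}$ \emph{important} separators of that size need be considered, all enumerable in $4^{O(\Delta)}\cdot n^{O(1)}$ time. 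The algorithm then (i) applies random separation in the style of Lemma~\ref{lem:bounded} restricted to a set of $O(\Delta^2 b)$ vertices (the anchors, their in- and out-neighbourhoods, and the in-neighbourhoods of those), costing a factor $2^{O(\Delta^2 b)}$, so as to isolate the ``interface'' of each anchor's region; (ii) for each of the $\le b$ anchors guesses an important separator describing how far its region extends and how many vertices it activates, a further $4^{O(\Delta b)}$ factor; and (iii) combines the per-anchor contributions under the budget $b$ by the same {\sc Knapsack} dynamic programme as in Lemma~\ref{lem:bounded}, in $O(bn)$ time, the colourings being derandomised by the hash families of Alon et al.\ at the cost of a $\log n$ factor. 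The total running time is $2^{O(\Delta^2 b)}\cdot n\log n$. The main obstacle, and the reason $k>\Delta/2$ (with bounded degree) is essential, is step~(ii): once $k\le\Delta/2$ the per-anchor regions branch and overlap uncontrollably, the important-separator bound degrades, and indeed \DAKC then becomes \classW2-hard parameterized by $b$.
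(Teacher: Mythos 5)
Your Step~3 rests on a false premise, and this is where the proof breaks. You assert that $|V(H)|$ ``admits no bound in $b$ and $\Delta$'' and that Lemma~\ref{lem:bounded} cannot be applied directly; in the regime $k>\Delta/2$ exactly such a bound holds, and it is the whole point of this lemma. For any solution $(A,H)$, every non-anchor $v$ satisfies $d_H^-(v)\ge k>\Delta/2\ge d_H(v)/2$, hence $d_H^-(v)-d_H^+(v)\ge 1$; since the sum of in-degrees equals the sum of out-degrees in $H$,
\[
|V(H)\setminus A|\;\le\;\sum_{v\in V(H)\setminus A}\bigl(d_H^-(v)-d_H^+(v)\bigr)\;=\;\sum_{v\in A}\bigl(d_H^+(v)-d_H^-(v)\bigr)\;\le\;\Delta|A|\;\le\;\Delta b,
\]
so $|V(H)|\le(\Delta+1)b$. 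The same counting also shows that the $k$-core $K$ of your Step~1 is empty when $k>\Delta/2$ (a nonempty $K$ would have to contain a vertex of out-degree at least $k$, hence total degree at least $2k>\Delta$), so the pendant structures you invoke in Step~3, whose vertices ``draw $k-1$ in-neighbours directly from $K$'', cannot exist, and your Step~1 reduction is vacuous. With the bound $|V(H)|\le(\Delta+1)b$ the paper's argument is immediate: answer NO if $p>(\Delta+1)b$, and otherwise apply Lemma~\ref{lem:bounded} with $q=(\Delta+1)b$, which runs in time $2^{O(\Delta q)}\cdot n\log n=2^{O(\Delta^2 b)}\cdot n\log n$. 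Important separators are not needed at all in this regime; they enter only in the harder boundary case $k=\Delta/2$ (Lemma~\ref{lem:bound-deg-saved}), whose proof your sketch loosely resembles.

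Even taken on its own terms, Step~3 is not a proof. The claim that each anchor's contribution ``is pinned down by a directed separator of size $O(\Delta)$'' with only $4^{O(\Delta)}$ important separators to consider is never justified (important separators are defined with respect to a fixed pair $s,t$, and you specify neither); the random-separation step is applied to a vertex set of size $O(\Delta^2 b)$ that cannot be identified before the anchors are known; and no argument is given that per-anchor regions can be combined by a knapsack dynamic programme, since two anchors' regions may overlap and jointly supply in-degree to the same vertex, so the contributions are not independent as they are for the weakly connected components in Lemma~\ref{lem:bounded}. Consequently steps (i)--(iii) have neither a correctness argument nor a derived running time; the remedy is to discard Step~3 and replace it by the counting bound above.
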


\begin{proof}
Suppose $(A,H)$ is a solution for the \DAKC problem. Let us observe that because $k>\Delta/2$, for every vertex $v\in
V(H)\setminus A$,  we have $d_H^-(v)>d_H^+(v)$. Recall that for any directed graph, the sum of in-degrees equals the sum of
out-degrees. Then
$$  \sum_{v\in V(H)\setminus A}(d_H^-(v)-d_H^+(v))=\sum_{v\in A}(d_H^+(v)-d_H^-(v)).$$
Since for every vertex $v\in V(H)\setminus A$,  $d_H^-(v)-d_H^+(v)\geq 1$, we have that
$$|V(H)\setminus A|\leq \sum_{v\in V(H)\setminus A}(d_H^-(v)-d_H^+(v)).$$
On the other hand,  $ d_H^+(v)-d_H^-(v)\leq \Delta$, and we arrive at
$$|V(H)\setminus A|\leq \sum_{v\in V(H)\setminus A}(d_H^-(v)-d_H^+(v))=\sum_{v\in A}(d_H^+(v)-d_H^-(v))\leq \Delta|A|.$$
Hence, $|V(H)|\leq (\Delta+1)|A|\leq (\Delta+1)b$. Using this observation, we can solve the \DAKC problem as follows. If
$p>(\Delta+1)b$, then we return a NO-answer. If $p\leq(\Delta+1)b$, we apply Lemma~\ref{lem:bounded} for $q=(\Delta+1)b$, and
solve that problem in time  $2^{O(\Delta^2 b)}\cdot n\log n$.
\end{proof}

Now we show that if $k=\frac{\Delta}{2}$ then the \DAKC problem is FPT parameterized by $\Delta+p$. Due the space restrictions we only sketch the proof of the following lemma.

\begin{lemma}\label{lem:bound-deg-saved}
Let $\Delta$ be a positive integer. If $k=\Delta/2$, then the \DAKC problem can be solved in time $2^{O(\Delta^3 b+ \Delta^2 b
p)}\cdot n^{O(1)}$ for $n$-vertex directed graphs of maximum degree at most $\Delta$.
\end{lemma}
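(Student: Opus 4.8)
The plan is to combine a structural analysis of solutions with the random–separation machinery already used in Lemma~\ref{lem:bounded}. First I would push the degree–counting argument of Lemma~\ref{lem:bound-deg-anchors} one step further. Let $(A,H)$ be a solution. Since $k=\Delta/2$, every non‑anchor vertex has $d_H^-(v)\ge k\ge d_H^+(v)$, so summing $\sum_v d_H^-(v)=\sum_v d_H^+(v)$ over $V(H)$ and rearranging gives $\sum_{v\in V(H)\setminus A}(d_H^-(v)-d_H^+(v))=\sum_{v\in A}(d_H^+(v)-d_H^-(v))\le\Delta b$. Hence the set $B$ of non‑anchor vertices with $d_H^-(v)>d_H^+(v)$ has $|B|\le\Delta b$, and every other non‑anchor vertex is \emph{balanced}; using $d_H^-(v)\ge\Delta/2$ and $d_H(v)\le\Delta$ this forces $d_H^-(v)=d_H^+(v)=\Delta/2$ and $d_H(v)=\Delta=d_G(v)$, i.e. $H$ contains every arc of $G$ at $v$. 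Writing $F=A\cup B$ (so $|F|\le(\Delta+1)b$) and $Q_0=\{v:d_G^-(v)=d_G^+(v)=\Delta/2\}$, the balanced vertices therefore form a union of connected components of $G^{*}-F$, each contained in $Q_0$. The point of $k=\Delta/2$ is that such components are \emph{free}: if $D$ is a component of $G^{*}-F$ with $D\subseteq Q_0$, then every $v\in D$ has all its in‑neighbours in $D\cup F\subseteq V(H)$, so $d_H^-(v)=d_G^-(v)=k$ automatically.

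Consequently I would reduce to a pure ``skeleton search'': we may assume $H=H_F:=G\bigl[F\cup\bigcup\{D:D\text{ a component of }G^{*}-F,\ D\subseteq Q_0\}\bigr]$, and the instance is positive iff there is a partition $F=A\uplus B$ with $|A|\le b$, $|B|\le\Delta b$, $|V(H_F)|\ge p$, and $d_{H_F}^-(v)\ge k$ for every $v\in B$. Before searching for $F$ I would peel off the unconditionally free part: let $P$ be the union of the connected components of $G^{*}$ lying entirely inside $Q_0$; a subfamily of these is already a valid anchor‑free solution, so if $|P|\ge p$ we answer \textsc{yes}. Otherwise $|P|<p$, and (assuming, as we may, that $F$ is disjoint from $P$) $V(H_F)=F\cup P\cup Z_F$, where $Z_F$ is the union of the components of $G^{*}-F$ lying in $Q_0$ and touching $F$; equivalently $Z_F$ is the set of vertices separated from $T:=V(G)\setminus Q_0$ by $F$ in $G^{*}$, minus $P$. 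There are at most $\Delta|F|=O(\Delta^{2}b)$ such components, but each of them may be large.

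The remaining work is to locate $F$ (with its partition). A large component $D$ of $Z_F$ has its whole boundary inside $F$, so $F$ contains a small $D$–$T$ separator; taking it to be an extremal one and applying Lemma~\ref{lem:imp-sep} bounds the number of candidates for these cuts by $2^{O(\Delta b)}$. Using this I would install a reduction rule that replaces each balanced region of $G^{*}[Q_0]$ by $O(\Delta b)$ of its boundary layers — enough to recover $|D|$ up to the threshold $p$ and to certify every possible in‑degree demand of a $B$‑vertex — after which every hypothetical $H_F$ has $|V(H_F)|=O(\Delta^{2}b+\Delta b\,p)$. On this bounded instance I would run the random‑separation algorithm exactly as in Lemma~\ref{lem:bounded}: colour $V(G)$ red/blue uniformly; with probability $2^{-O(\Delta^{3}b+\Delta^{2}bp)}$ all of $H_F$ together with the relevant cut structure is red and the surrounding vertices are blue, so $H_F$ appears as a union of components of the red subgraph, whence $A$, $B$ and the conditions are read off directly. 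Repeating $2^{O(\Delta^{3}b+\Delta^{2}bp)}$ times, or derandomising with the hash families of Alon et al.~\cite{AlonYZ95} at the cost of a $\log n$ factor, gives the stated running time.

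The main obstacle — and the feature that distinguishes this case from $k>\Delta/2$ in Lemma~\ref{lem:bound-deg-anchors} — is precisely that balanced regions can be arbitrarily large, so $|V(H)|$ is \emph{not} bounded and Lemma~\ref{lem:bounded} cannot be applied off the shelf. The heart of the argument is the reduction rule shrinking these regions while simultaneously preserving their vertex counts up to the relevant threshold and their capacity to feed in‑neighbours to the $O(\Delta b)$ vertices of $B$; tuning this rule so that the colour classes have total size $O(\Delta^{2}b+\Delta b\,p)$, and hence match the claimed exponent, is the delicate bookkeeping step.
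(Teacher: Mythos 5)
Your structural reformulation is correct and is genuinely different from the paper's starting point: bounding the set $B$ of non-anchor vertices with $d_H^-(v)>d_H^+(v)$ by $\Delta b$, observing that every other non-anchor vertex is balanced with $d_H^-(v)=d_H^+(v)=k$ and keeps \emph{all} of its $G$-arcs inside $H$, and concluding that these vertices form components of $G^*-F$ ($F=A\cup B$) contained in $Q_0$, so that one may assume $H=H_F$, is a clean and valid characterization. (The paper instead proves, via an Eulerian-walk argument, that any large core can be assumed to have a non-anchor vertex $t$ reachable from every vertex of $H$ by paths avoiding $A$, and then builds the solution directly from two rounds of important $s$-$t$ separators toward $t$; it never shrinks the instance or bounds $|V(H)|$ in the large-core case.)

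However, the algorithmic heart of your argument is missing, and this is a genuine gap rather than routine bookkeeping. The step ``install a reduction rule that replaces each balanced region of $G^*[Q_0]$ by $O(\Delta b)$ of its boundary layers'' is asserted, not constructed, and as described it appears unsound: $F$ may cut \emph{deep inside} a large component $R$ of $G^*[Q_0]$, so the free component $D$ it creates need not lie anywhere near $R$'s boundary with $T=V(G)\setminus Q_0$; truncating $R$ to layers near that boundary destroys exactly such solutions and can turn a YES-instance into a NO-instance, and it is precisely this rule that is supposed to deliver the bound $|V(H_F)|=O(\Delta^2 b+\Delta b\,p)$ matching the claimed exponent. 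The appeal to Lemma~\ref{lem:imp-sep} is likewise underspecified: that lemma enumerates important $s$-$t$ separators for \emph{fixed} non-adjacent terminals in a directed-reachability sense, but you never fix terminals (the paper must first guess a vertex $t$ inside the core and add an artificial source $s$ joined to all vertices of in-degree less than $k$), and you do not show that the true boundary $F$ --- whose vertices must simultaneously respect the anchor budget and the in-degree demands of $B$ --- can be replaced by an important separator while a solution survives; the paper needs its Claims~C, D and~F, together with an intermediate exhaustive guess of the boundary vertices and of the deleted incoming arcs, for exactly this point. Finally, Lemma~\ref{lem:bounded} cannot be run ``exactly as is'' on the truncated instance, since truncation lowers in-degrees at the cut and the red components would then report spurious anchors. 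Until the reduction rule is specified and proved safe (or replaced by an argument in the spirit of the paper's separator machinery), the proof does not go through.
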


\begin{proof}
We describe an \classFPT~ algorithm. Consider an instance of the \DAKC problem. Without loss of generality we assume that $b<p\leq n$.

We apply the following preprocessing rule. Suppose that $G$ has a (weakly) connected component $C$ such that for any $v\in
V(C)$, $d_C^-(v)=d_C^+(v)=k$. If $b\geq p-|V(C)|$, then we choose a set $A$ of $b'=p-|V(C)|$ vertices arbitrary in
$V(G)\setminus V(C)$. Then we return a YES-answer, as the anchors $A$ and $H=G[A\cup V(C)]$ is a solution. Otherwise, if $b<
p-|V(C)|$, we let $G'=G-V(C)$ and $p'=p-|V(C)|$. Now we consider a new instance of the problem with the graph $G'$ and the
parameter $p'$. To see that the rule is safe, it is sufficient to observe that a set of anchors $A$ and a subgraph $H'$ of
size at least $p'$ is a solution of the obtained instance if and only if $A$ and $H=G[V(H')\cup V(C)]$ is a solution for the
original problem. From now we assume that $G$ has no such components.

\medskip

We need the following claim.

\medskip
\noindent
{\bf Claim A.} {\it
If an instance of the \DAKC problem has a core with at least $(\Delta p+1)b+1$ vertices, then it has a solution
$(A,H)$ with the following property: there is a vertex $t\in V(H)\setminus A$  reachable in $H$ from any vertex of $H$.
Moreover, for each vertex $v$ of $H$, there is a path from $v$ to $t$ with all vertices except $v$ in  $ V(H)\setminus A$.
}

\begin{proof}[Proof of Claim~A]
Let  $(A,H')$ be a solution with the set of anchors $A$ and such that $V(H')> (\Delta p+1)b$.

We show that $V(H')=R_{H'}^+(A)$, i.e., all vertices of $H'$ are reachable from the anchors. To obtain a contradiction,
suppose that there is a vertex $u\in V(H')$ such that $u\notin R_{H'}^+(A)$. Let $U=R_{H'}^-(u)$, i.e., $U$ is the set of
vertices from which we can reach $u$. Clearly, $A\cap U=\emptyset$. Therefore, $d_{H'}^-(v)\geq k=\Delta/2$ for $v\in U$.
Notice that for a vertex $v\in U$, $N_{H'}^{-}(v)\subseteq U$ by the definition. Hence, $d_{G[U]}^-(v)\geq k=\Delta/2$ for $v\in
U$. Because the sum of in-degrees equals the sum of out-degrees, for  every vertex $v\in U$, we have that
$d_{G[U]}^-(v)=d_{G[U]}^+(v)=k=\Delta/2$. Then $C=G[U]$ is a component of $G$ such that for every $v\in V(C)$,
$d_C^-(v)=d_C^+(v)=k$, but such components are excluded by the preprocessing; a contradiction.

Observe now that if $d_{H'}^-(v)<d_{H'}^+(v)$, then $d_{H'}^-(v)<k$ and thus $v\in A$. Hence, by adding at most $\Delta b$
(maybe multiple) arcs from $V(H')\setminus A$ to $A$, joining the vertices $v\in V(H')$ of degrees $d_{H'}^-(v)>d_{H'}^+(v)$ with
vertices of degrees  $d_{H'}^-(v)<d_{H'}^+(v)$, we  can  transform $H'$ into a disjoint union of  directed Eulerian graphs.
Since $V(H')=R_{H'}^+(A)$, each of these directed Eulerian graphs contains at least one vertex of $A$. Thus  the set of arcs
of $H'$ can be covered by at most $\Delta b$ arc-disjoint directed walks,   each walk starting from a vertex of $A$ and never
coming back to $A$. Because $d_{H'}^-(v)\geq k$ for $v\in V(H')\setminus A$, we have that $|E(G')|\geq k(|V(H')|-b)> \Delta
kbp$. Then there is a walk $W$ with at least $kp+1$ arcs.  Let $a\in A$ be the first vertex of $W$ and let $t$ be the last
vertex of the walk. The walk $W$ visits $a$ only once, $t$  and all other vertices of $W$ are visited at most $k$ times. We
conclude  that $W$ has at least $p$ vertices.

Let $R=R_{H'-A}^-(t)$ and let $A'=\{a\in A\ |\ N_{H'}^+(a)\cap R\neq\emptyset\}$. Consider $H=G[R\cup A']$. Since
$V(W)\subseteq V(H)$, $|V(H)|\geq p$.  For any $v\in V(H)\setminus A$, the in-neighbors of $v$ in $H'$ are in $H$ by the
construction and, therefore, $d_H^-(v)\geq k$. It remains to observe that to select at most $b$ anchors, we take $A'\subseteq
V(H)$.
\end{proof}

Using  Claim~A, we proceed with our algorithm. We try to find a solution such that $H$ has at most $q=(\Delta
p+1)b$ vertices by applying  Lemma~\ref{lem:bounded}. It takes time $O(2^{O(\Delta^2 bp)}\cdot n\log n)$. If we obtain a
solution, then we return it and stop. Otherwise, we conclude that every core  contains at least $(\Delta p+1)b+1$
vertices. By Claim~A, we can search for a solution   $H$ with a  non-anchor vertex $t$ which is reachable from
all other vertices of $H$ by directed paths  avoiding $A$. Notice that since $t$ is a non-anchor vertex, we have that
$d_G^-(t)\geq k$. We try  at most $n$ possibilities for all possible choices of  $t$, and solve our problem for each choice.
Clearly, if we get a YES-answer for one of the choices, we return it and stop. Otherwise, if we fail,  we return a NO-answer.

From now we assume that we already selected $t$. We denote by $G'$ the graph obtained from $G$ by adding  an artificial source
vertex $s$ joined by arcs with all the vertices $v\in V(G)$ with $d_G^-(v)<k$. Observe that $(s,t)\notin E(G')$.

\medskip

Suppose that $(A,H)$ is a solution with the set of anchors $A$ such that $t\in V(H)\setminus A$ is reachable in $H$ from any
vertex of $H$ by a path with all inner vertices in $V(H)\setminus A$. Denote by $\delta_{G'}(H)$ the set $\{v\in V(H)\ |\
N_{G'}^-(v)\setminus V(H)\neq \emptyset\}$, i.e., $\delta_{G'}(H)$ contains vertices that have in-neighbors outside $H$. We
need a chain of claims about the structure of $H$ in $G'$.

\medskip
\noindent
{\bf Claim~B.} {\it
$|\delta_{G'}(H)\setminus A|\leq \Delta b$.
}

\begin{proof}[Proof of Claim~B]
Let $X=\{v\in V(H)\ |\ d_H^-(v)\geq k\text{ and }d_H^+(v)<k\}$, $Y=\{v\in V(H)\ |\ d_H^-(v)=d_H^+(v)=k\}$ and $Z=\{v\in V(H)\
|\ d_H^-(v)<k\}$. Clearly, $$\sum_{v\in X}(d_H^-(v)-d_H^+(v))+ \sum_{v\in Y}(d_H^-(v)-d_H^+(v))=\sum_{v\in
Z}(d_H^+(v)-d_H^-(v))$$ Observe that $d_H^-(v)-d_H^+(v)\geq 1$ for $v\in X$,  $d_H^-(v)-d_H^+(v)=0$ for $v\in Y$ and
$d_H^+(v)-d_H^-(v)\leq \Delta$ for $v\in Z$. Hence, $|X|\leq \Delta|Z|$. If $d_H^-(v)<k$ for $v\in V(H)$, then $v\in A$. It
follows that $Z\subseteq A$ and $|Z|\leq b$. We have $|X|\leq \Delta b$. Consider a vertex $v\in \delta_{G'}(H)\setminus A$.
It has at least one in-neighbor outside $H$ in $G$ and $d_H^-(v)\geq k$. Then $d_H^+(v)<k$ and  $v\in X$. We conclude that
$\delta_{G'}(H)\setminus A\subseteq X$ and $|\delta_{G'}(H)\setminus A|\leq \Delta b$.
\end{proof}

\medskip
\noindent
{\bf Claim~C.} {\it
There is  an $s-t$ separator $S$ in $G'$ of size at most $(\Delta(k-1)+1)b$ such that
 $V(H)\setminus A\subseteq R_{G'-S}^-(t)$.
}

\begin{proof}[Proof of Claim~C]
Let $S=\Big(\delta_{G'}(H)\cap A\Big)\cup\Big(\bigcup_{v\in \delta_{G'}(H)\setminus A}(N_G^-(v)\setminus V(H)\Big)$, i.e., the
set containing all anchors that are in $\delta_{G'}$, and for each non-anchor vertex of $\delta_{G'}$ containing all its
in-neighbors outside of $H$. Consider a directed $(s,t)$-path $P$ in $G'$. Let $v$ be the first vertex in $P$ that is in
$V(H)$ and let $u$ be its predecessor in $P$. If $v\in A$, then $v\in S$. If $v\notin A$, then $u\neq s$ as $H$ has no
non-anchor vertices with in-degree at most $k-1$ in $G$. Then $u\in S$. We conclude that each $(s,t)$-path contains a vertex
of $S$, i.e., this set is an $s-t$ separator.

Observe that $V(H)\setminus A\subseteq R_{G'-S}^-(t)$ by the definition of $S$ and the fact that $t$ can be reached from any
vertex of $H$ in this graph by a path with all inner vertices in $V(H)\setminus A$.

It remains to show that $|S|\leq (\Delta(k-1)+1)b$.  By Claim~B, $|\delta_{G'}(H)\setminus A|\leq \Delta b$. A
vertex $v\in \delta_{G'}(H)\setminus A$ has at least one out-neighbor in $H$ because $t$ is reachable from $v$. Then $v$ has
at most $k-1$ in-neighbors outside $H$. Hence $|S|\leq |A|+(k-1)(\delta_{G'}(H)\setminus A)\leq (\Delta(k-1)+1)b$.
\end{proof}

Now we can prove the following claim about important $s-t$ separators in $G'$.

\medskip
\noindent
{\bf Claim~D.} {\it
There is an important $s-t$ separator $S^*$ of size at most $(\Delta(k-1)+1)b$ in $G'$ such that $V(H)\subseteq
R_{G'-S^*}^-(t)\cup S^*$.
}

\begin{proof}[Proof of Claim~D]
By Claim~C, there is an $s-t$ separator $S'$ in $G'$ of size at most $(\Delta(k-1)+1)b$ such that $V(H)\setminus
A\subseteq R_{G'-S'}^-(t)$. Notice that $S'$ not necessary a minimal separator, but there is a minimal $s-t$ separator
$S\subseteq S'$. Clearly, $|S|\leq (\Delta(k-1)+1)b$.

We show that  $V(H)\subseteq R_{G'-S}^-(t)\cup S$.  
Because $R_{G'-S'}^-(t)\subseteq  R_{G'-S}^-(t)$, we have that $V(H)\setminus A\subseteq R_{G'-S}^-(t)$. Also if an anchor $a$
is in  $R_{G'-S'}^-(t)$, then $a\in R_{G'-S}^-(t)$. Let $a\in A\cap S'$. If $a\in A\cap S$, then $a\in R_{G'-S}^-(t)\cup S$.
If  $a\notin S$, then by Claim~C, $a$ has an out-neighbor $v\in R_{G'-S'}^-(t)$ and in this case we have $a\in
R_{G'-S}^-(t)$.

It remains to observe that there is an important $s-t$ separator $S^*$ such that $|S^*|\leq |S|\leq (\Delta(k-1)+1)b$ and
$R_{G'-S}^-(t)\subseteq   R_{G'-S^*}^-(t)$. Therefore, $V(H)\subseteq R_{G'-S}^-(t)\cup S\subseteq R_{G'-S^*}^-(t)\cup S^*$.
\end{proof}

The next step of our algorithm is to check all important $s-t$ separators in $G'$ of size at most  $(\Delta(k-1)+1)$. By
Lemma~\ref{lem:imp-sep}, there are at most $4^{(\Delta(k-1)+1)b}$ important $s-t$ separators and they can be listed in time
$2^{O(\Delta^2b)}\cdot n^c$. For each important $s-t$ separator $S^*$, we consider the set of vertices $U= R_{G'-S^*}^-(t)\cup
S^*$ and decide whether there is a solution such that $V(H)\subseteq U$. If we have a solution for some $S^*$, then we return
a YES-answer and stop. Otherwise, if we fail  to find such a solution for all important separators, we use
Claim~D  to deduce that there is no solution.

\medskip
From now on, we assume that an important $s-t$ separator $S^*$ is given and that  $U= R_{G'-S^*}^-(t)\cup S^*$.
In what follows, we describe a procedure of finding  a solution with  $V(H)\subseteq U$.

\medskip
Denote by $D$ the set $\{v\in U\ |\ d_G^-(v)>0\}$. We need the following observation.

\medskip
\noindent
{\bf Claim~E.} {\it
Set $D$ contains at most $(\Delta+1) (\Delta(k-1)+1)b$ vertices.
}

\begin{proof}[Proof of Claim~E]
Let $Q=G[U]$. Let $X=\{v\in V(Q)\ |\ d_Q^-(v)\geq k\text{ and }d_Q^+(v)<k\}$, $Y=\{v\in V(Q)\ |\ d_Q^-(v)=d_Q^+(v)=k\}$ and
$Z=\{v\in V(Q)\ |\ d_Q^-(v)<k\}$. Clearly,
$$\sum_{v\in X}(d_Q^-(v)-d_Q^+(v))+ \sum_{v\in Y}(d_Q^-(v)-d_Q^+(v))=\sum_{v\in Z}(d_Q^+(v)-d_Q^-(v))$$
Observe that $d_Q^-(v)-d_Q^+(v)\geq 1$ for $v\in X$,  $d_Q^-(v)-d_Q^+(v)= 0$ for $v\in Y$ and $d_Q^+(v)-d_Q^-(v)\leq \Delta$
for $v\in Z$. Hence, $|X|\leq \Delta|Z|$.

Recall that $G'$ is obtained from $G$ by joining $s$ with all vertices of in-degree at most $k-1$. Since $S^*$ is an $s-t$
separator, if for $v\in U$, $d_Q^-(v)<k$, then $v\in S^*$. Hence, $Z\subseteq S^*$ and $|Z|\leq |S^*|\leq (\Delta(k-1)+1)b$.
If for for $v\in U$, $d_G^-(v)>k$, then $v\in X\cup Z$. We conclude that $|D|\leq|X|+|Z|\leq (\Delta+1)|Z|\leq (\Delta+1)
(\Delta(k-1)+1)b$.
\end{proof}

Recall   that  set $\delta_{G'}(H)$ contains vertices of $H$ that have in-neighbors outside of $H$. If $v\in
\delta_{G'}(H)\setminus A$, then it has at least $k$ in-neighbors in $H$ and at least one in-neighbor outside $H$. Notice that
$s\notin N_{G'}^-(v)$ because $d_G^-(v)\geq d_H^-(v)\geq k$. Hence, $d_G^-(v)>k$. Because $V(H)\subseteq U$,
$\delta_{G'}(H)\setminus A\subseteq D$. By Claim~C, $|\delta_{G'}(H)\setminus A|\leq \Delta b$, and by
Claim~E, $|D|\leq (\Delta+1) (\Delta(k-1)+1)b$.
We consider  all at most
$  2^{(\Delta+1) (\Delta(k-1)+1)b}$ possibilities to select  $\delta_{G'}(H)\setminus A$. For each choice of
$\delta_{G'}(H)\setminus A$, we guess the arcs that join the vertices that are outside $H$ with the vertices of
$\delta_{G'}(H)\setminus A$ and delete them. Denote the graph obtained from $G$ by $F$. Recall that from each vertex $v$ of
$\delta_{G'}(H)\setminus A$, there is a directed path to $t$ that avoids $A$. Hence, $v$ has at least one out-neighbor in $H$
and at most $\Delta-1$ in-neighbors in $G$. Also $v$ has at least $k$ in-neighbors in $H$, and we delete at most $d_G^-(v)-k$
arcs. Therefore, for $v$ we choose at most $k-1$ arcs out of at most $\Delta-1$ arcs. We can upper bound the number of
possibilities for $v$ by $2^{\Delta-1}$, and the total number of possibilities for $\delta_{G'}(H)\setminus A$ is
$2^{(\Delta-1)\Delta b}$.

Observe that $(A,H)$ is a solution for the new instance of \DAKC, where $G$ is replaced by $F$ for a correct guess of the
deleted arcs. Also each solution for the new instance provides a solution for the graph $G$, because if we put deleted arcs
back, then we can only increase in-degrees. Hence, we can check for each possible choice of the set of deleted arcs, whether
the new instance has a solution. If for some choice we obtain a solution, then we return a YES-answer. Otherwise, if we fail
for all choices, then we return a NO-answer. Further we assume that $F$ is given.

Denote by $F'$ the graph obtained from $F$ by the addition of a vertex $s$ joined by arcs with all the vertices $N_{G'}^+(s)$.
Now $\delta_{F'}(H)=\{v\in V(H)\ |\ N_{F'}^-(v)\setminus V(H)\neq \emptyset\}$. By the choice of $F$,
$\delta_{F'}(H)=\delta_{G'}(H)\cap A$ and, therefore, $|\delta_{F'}(H)|\leq b$. Also $\delta_{F'}(H)$ is an $s-t$ separator in
$F'$ by Claim~C.

Now we can prove the following.

\medskip
\noindent
{\bf Claim~F.} {\it
There is an important $s-t$ separator $\hat{S}$ of size at most $b$ in $F'$ such that $(\hat{S},G[R_{F'-\hat{S}}^-(t)\cup
\hat{S}])$ is a solution for the instance of the \DAKC problem for the graph $G$.
}

\begin{proof}[Proof of Claim~F]
Let $U=R_{F'-\hat{S}}^-(t)\cup \hat{S}$. It was already observed that $\delta_{G'}^*(H)$ is an $s-t$ separator in $F'$ of size
at most $b$. Then there is a minimal $s-t$ separator $S\subseteq \delta_{G'}^*(H)$. Clearly, $|S|\leq b$.

As before in the proof of Claim~D, we show that  $V(H)\subseteq R_{F'-S}^-(t)\cup S$. Because for any vertex
$v$ of $H$, there is a directed $(v,t)$ path with all inner vertices in $V(H)\setminus A$, $V(H)\setminus A\subseteq
R_{F'-\delta_{F'}(H)}^-(t)$. Because $R_{F'-\delta_{F'}(H)}^-(t)\subseteq  R_{F'-S}^-(t)$ we have $V(H)\setminus A\subseteq
R_{F'-S}^-(t)$. Also if  $a\in A$ is in  $R_{F'-\delta_{F'}(H)}^-(t)$, then $a\in R_{F'-S}^-(t)$. Let $a\in A\cap
\delta_{F'}(H)$. Trivially, if $a\in A\cap S$, then $a\in R_{F'-S}^-(t)\cup S$. If  $a\notin S$, then $a$ has an out-neighbor
$v\in R_{F'-\delta_{F'}(H)}^-(t)$ and $a\in R_{F'-S}^-(t)$. Then there is an important $s-t$ separator $\hat{S}$ such that
$|\hat{S}|\leq |S|\leq b$ and $R_{F'-S}^-(t)\subseteq R_{F'-\hat{S}}^-(t)$. Therefore, $V(H)\subseteq R_{F'-S}^-(t)\cup
S\subseteq R_{F'-S^*}^-(t)\cup S^*$, and $|U|\geq p$.

It remains to observe that $s$ is adjacent to all vertices of $G$ with in-degrees at most $k-1$ and $S^*$ is an $s-t$
separator. It immediately follows that for any  vertex $v\in R_{F'-S^*}^-(t)$, $d_{F(U)}^-(v)\geq k$. Then
$(\hat{S},G[R_{F'-\hat{S}}^-(t)\cup \hat{S}])$ is a solution.
\end{proof}

The final step of our algorithm is to enumerate all important $s-t$ separators $\hat{S}$ of size at most $b$ in $F'$, which
number by Lemma~\ref{lem:imp-sep} is at most $4^b$, and for each $\hat{S}$, check whether $(\hat{S},G[R_{F'-\hat{S}}^-(t)\cup
\hat{S}])$ is a solution. Recall that all these separators can be listed in time $2^{O(b)}\cdot n^c$. We return a YES-answer
if we obtain a solution for some important separator, and a NO-answer otherwise.

To complete the proof, let us observe that each step of the algorithm runs either in polynomial or \classFPT~ time.
Particularly, the preprocessing is done in time $O(\Delta n)$. Then we check the existence of a solution of a bounded size in
time $2^{O(\Delta^2 bp)}\cdot n\log n$. Further we consider at most $n$ possibilities to choose $t$. For each $t$, we consider
at most $4^{(\Delta(k-1)+1)b}$ important $s-t$ separators $S^*$. Recall, that they can be listed in time
$2^{O(\Delta^2b)}\cdot n^c$ for some constant $c$. Then for each $S^*$, we have  at most  $2^{(\Delta+1)
(\Delta(k-1)+1)b+(\Delta-1)}$ possibilities to construct $F$, and it can be done in time $2^{O(\Delta^3b)}+O(\Delta n)$.
Finally, there are at most $4^b$ important $s-t$ separators $\hat{S}$ and they can be listed in time $2^{O(b)}\cdot n$ for
some $c$. We conclude that the total running time is $2^{O(\Delta^3b+\Delta^2bp)}\cdot n^c$ for some constant $c$.
\end{proof}

Combining Lemmas~\ref{lem:bound-deg-anchors} and \ref{lem:bound-deg-saved}, we obtain the following theorem.

\begin{theorem}\label{thm:bound-deg}
Let $\Delta$ be a positive integer. If $k\geq \frac{\Delta}{2}$, then the \DAKC problem can be solved in time $2^{O(\Delta^3
b+ \Delta^2 b p)}\cdot n^{O(1)}$ for $n$-vertex directed graphs of maximum degree at most $\Delta$.
\end{theorem}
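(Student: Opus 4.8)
The plan is to obtain Theorem~\ref{thm:bound-deg} by a trivial case analysis on the relationship between $k$ and $\Delta$, feeding the two lemmas already proved in this section into the appropriate cases. Since $k$ is a positive integer, the hypothesis $k \geq \frac{\Delta}{2}$ splits into exactly two subcases: either $k > \frac{\Delta}{2}$, or $k = \frac{\Delta}{2}$ (the latter occurring only when $\Delta$ is even). These two subcases together exhaust the whole range, so it suffices to handle each.

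In the first subcase, $k > \frac{\Delta}{2}$, I would apply Lemma~\ref{lem:bound-deg-anchors} verbatim: it solves \DAKC in time $2^{O(\Delta^2 b)}\cdot n\log n$ on $n$-vertex digraphs of maximum degree at most $\Delta$. Since $2^{O(\Delta^2 b)} \le 2^{O(\Delta^3 b + \Delta^2 b p)}$ and $n\log n \le n^{O(1)}$, this running time is within the bound stated in the theorem. In the second subcase, $k = \frac{\Delta}{2}$, I would apply Lemma~\ref{lem:bound-deg-saved}, which solves \DAKC in time $2^{O(\Delta^3 b + \Delta^2 b p)}\cdot n^{O(1)}$ — precisely the bound in the statement. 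Combining the two subcases yields the theorem for all $k \geq \frac{\Delta}{2}$.

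I would close with the two remarks promised in the introduction: recalling that one may always assume $b \le p$, the theorem shows in particular that \DAKC is \classFPT\ parameterized by $\Delta + p$ whenever $k \geq \frac{\Delta}{2}$, and hence \classFPT\ parameterized by $p$ alone on digraphs of maximum degree at most four.

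There is essentially no obstacle in this final step: all the substantive work lives in the two lemmas — the degree-counting argument bounding $|V(H)| \le (\Delta+1)b$ when $k > \frac{\Delta}{2}$, and the Eulerian-walk plus important-separator machinery of Claims~A–F in the borderline case $k = \frac{\Delta}{2}$. The only things worth spelling out are that integrality of $k$ makes the two subcases exhaustive, and that the running-time expression of Lemma~\ref{lem:bound-deg-saved} dominates that of Lemma~\ref{lem:bound-deg-anchors}, so the single bound $2^{O(\Delta^3 b + \Delta^2 b p)}\cdot n^{O(1)}$ covers both.
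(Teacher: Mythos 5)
Your proposal is correct and matches the paper exactly: the theorem is obtained by combining Lemma~\ref{lem:bound-deg-anchors} (for $k>\Delta/2$) and Lemma~\ref{lem:bound-deg-saved} (for $k=\Delta/2$), with the stated bound dominating both running times. Nothing further is needed.
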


Theorems~\ref{thm:fpt-d-1} and \ref{thm:bound-deg} give the next corollary.

\begin{corollary}\label{cor:fpt-delta}
The \DAKC problem can be solved in time $2^{O(bp)}\cdot n^{O(1)}$  for $n$-vertex directed graphs of maximum degree at most
$4$.
\end{corollary}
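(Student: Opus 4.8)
The plan is to derive Corollary~\ref{cor:fpt-delta} directly from Theorems~\ref{thm:fpt-d-1} and \ref{thm:bound-deg} by a short case analysis on the value of $k$, using that for a digraph of maximum degree at most $\Delta=4$ we have $\Delta/2=2$. Throughout, I would use the standing assumption that $b\leq p$ (noted in the introduction), so that $p\leq bp$ and hence $2^{O(p)}\leq 2^{O(bp)}$.

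First I would dispose of the case $k=1$. Here Theorem~\ref{thm:fpt-d-1} gives an algorithm running in time $2^{O(p)}\cdot n^2\log n$ on arbitrary digraphs, and in particular on those of maximum degree at most $4$; since $2^{O(p)}\leq 2^{O(bp)}$, this is within the claimed bound $2^{O(bp)}\cdot n^{O(1)}$. Note this case genuinely needs Theorem~\ref{thm:fpt-d-1} rather than Theorem~\ref{thm:bound-deg}, because for $k=1$ and $\Delta\in\{3,4\}$ the hypothesis $k\geq\Delta/2$ of Theorem~\ref{thm:bound-deg} fails.

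Next I would handle $k\geq 2$. Since the maximum degree is at most $4$, we have $k\geq 2=\Delta/2\geq\Delta/2$, so the hypothesis $k\geq\frac{\Delta}{2}$ of Theorem~\ref{thm:bound-deg} is satisfied. Applying that theorem yields an algorithm with running time $2^{O(\Delta^3 b+\Delta^2 b p)}\cdot n^{O(1)}$; substituting $\Delta\leq 4$ turns the exponent into $O(b+bp)=O(bp)$, so the running time is $2^{O(bp)}\cdot n^{O(1)}$, as required. Combining the two cases completes the argument.

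\textbf{Main obstacle.} There is essentially no technical obstacle here: the corollary is a pure specialization of the two cited theorems, and the only point requiring a moment's care is the case split at $k=1$, where one must invoke Theorem~\ref{thm:fpt-d-1} because Theorem~\ref{thm:bound-deg} does not cover $k<\Delta/2$. All arithmetic (bounding $\Delta^3 b+\Delta^2 bp$ by $O(bp)$ for $\Delta\leq 4$, and $2^{O(p)}$ by $2^{O(bp)}$ using $b\leq p$) is routine.
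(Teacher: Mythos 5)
Your proposal is correct and matches the paper's own (one-line) derivation: the corollary is obtained exactly by combining Theorem~\ref{thm:fpt-d-1} for $k=1$ with Theorem~\ref{thm:bound-deg} for $k\geq 2=\Delta/2$ and substituting $\Delta\leq 4$ into the exponent. The only pedantic caveat, which the paper also glosses over, is that absorbing $2^{O(p)}$ into $2^{O(bp)}$ needs $b\geq 1$; for $b=0$ the problem is anyway solvable in polynomial time by computing the $k$-core directly, so nothing is lost.
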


\section{Conclusions}\label{sec:concl}
We proved that \DAKC is \classNP-complete even for planar DAGs of maximum degree at most $k+2$. It was also shown  that \DAKC
is \classFPT~ when parameterized by $p+\Delta$ for directed graphs of maximum degree at most
$\Delta$ whenever $k\geq\Delta/2$. It is natural to ask whether the problem is \classFPT~ for other values $k$. This question
is interesting even for the special case $\Delta=5$ and $k=2$.

For the special case of  directed acyclic graphs (DAGs) we understand  the complexity of the problem much better.
Theorem~\ref{thm:w-saved} showed that \DAKC on DAGs is \classW{1}-hard parameterized by $p$ for every fixed $k\geq 2$, when
the degree of the graph is not bounded. We now show the following theorem  that gives
\classW{2}-hardness of \DAKC when parameterized  by the number of anchors $b$ (recall that we can always assume that $b\leq
p$).

\begin{theorem}\label{thm:w-hardness-dags-bounded-degree}
For any $\Delta\geq 3$ and any positive $k<\frac{\Delta}{2}$,  \DAKC  is \classW{2}-hard (even on DAGs) when parameterized by
the number of anchors $b$ on graphs of maximum degree at most $\Delta$.
\end{theorem}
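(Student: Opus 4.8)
The plan is to reduce from \textsc{Set Cover} parameterized by the number of sets in a cover, which is \classW{2}-complete (equivalently, \textsc{Dominating Set} parameterized by solution size). This starting point is essentially forced: \textsc{Set Cover} is \classFPT{} by solution size whenever the sets have bounded size \emph{or} the elements have bounded frequency, so any reduction to a bounded-degree instance must ``fan out'' both large sets and high-frequency elements, and directed acyclic tree gadgets are exactly the device for that. I describe the construction for $k\ge 2$; the case $k=1$ is analogous and in fact simpler, since no ``helper'' arcs are needed and the conveyor below can be omitted.

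Let $(U,\mathcal S,b)$ be a \textsc{Set Cover} instance with $U=\{u_1,\dots,u_n\}$, $\mathcal S=\{S_1,\dots,S_m\}$, $\bigcup\mathcal S=U$; add $b$ dummy empty sets, so that any cover of size at most $b$ can be padded to exactly $b$ sets. I build a DAG $G$ with $\Delta(G)\le 2k+1\le\Delta$ consisting of: (1) for each $S_j$, a source $s_j$ carrying a complete binary out-tree of a fixed size $T$, with a distinguished leaf $x_{j,i}$ for each $u_i\in S_j$ and the remaining leaves dead sinks; (2) for each $u_i$, a complete binary in-tree (an ``OR-tree'') whose $m$ leaves are the $x_{j,i}$ for $u_i\in S_j$ together with dead source-leaves for $u_i\notin S_j$, whose root $\sigma_i$ feeds a single representative vertex $r_i$, which in turn feeds a fixed-size ``payload'' gadget $P_i$ of size $\Pi$; (3) a \emph{conveyor}: a layered DAG of width $k$ whose layer $0$ is $k$ fresh sources and each of whose later layers again has $k$ vertices, each joined from all $k$ vertices of the previous layer. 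The conveyor is made long enough that its spare out-arcs can give every out-tree vertex and every OR-tree vertex its $k-1$ extra in-arcs, and can jump-start each $P_i$ (itself a short width-$k$ conveyor) together with $r_i$. Every padded vertex then has in-degree $k$ or $k+1$, so it may enter the core exactly when the conveyor is live and at least one of its non-conveyor in-neighbours does, and $P_i$ may enter the core exactly when $r_i$ does. A direct check shows $G$ is acyclic with all degrees at most $2k+1$. I choose the constants so that $T\gg\Pi\gg N_0$, where $N_0$ is the total number of OR-tree vertices, and the conveyor length polynomial; set the \DAKC budget to $b+k$, and set $p$ to the size of the canonical core described next. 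All of this is polynomial, and the parameter grows only by the constant $k$, so it is a parameterized reduction.

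For correctness, if $\mathcal C$ is a cover of size at most $b$, pad it with dummies to exactly $b$ sets, anchor those $b$ set-sources and the $k$ conveyor sources, and let $H$ contain the whole conveyor, all $b$ out-trees, and, for each $u_i$, a single directed path through $u_i$'s OR-tree from an active leaf up to $\sigma_i$, then $r_i$ and all of $P_i$: every non-anchor has in-degree $\ge k$, and $|V(H)|$ equals the declared value $p$. Conversely, let $(A,H)$ be a solution with $|A|\le b+k$ and $|V(H)|\ge p$. If fewer than $k$ conveyor sources are anchored, the conveyor dies, hence so do all out-tree and OR-tree vertices, and $|V(H)|\ll p$; so all $k$ conveyor sources lie in $A$, leaving at most $b$ further anchors. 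An exchange argument — replacing an out-tree anchor by the root $s_j$ of its tree (never worse, since $s_j$ activates everything that anchor could), and any other non-conveyor, non-set-source anchor (an OR-tree vertex, $r_i$, a payload vertex, \dots) by a dummy set-source (never worse, since $T$ exceeds the total number of vertices such an anchor can switch on) — lets us assume the remaining $\le b$ anchors are set-sources. If the corresponding sets miss some $u_i$, then $r_i$, hence all of $P_i$, lie outside $H$; since $\Pi>N_0$ and the OR-tree vertices are the only vertices whose core-membership is not pinned down by $A$, this forces $|V(H)|<p$, a contradiction. Hence the selected sets form a cover of size at most $b$.

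The one genuinely delicate point is that $|V(H)|$ is \emph{not} a clean function of which elements are covered: a multiply-covered element can legitimately contribute a variable number of its OR-tree vertices to the core. The fix is the strict size hierarchy $T\gg\Pi\gg N_0$ (broadcast trees $\gg$ payloads $\gg$ all OR-tree vertices), which makes the core size dominated by the number of satisfied payloads, i.e.\ by the number of covered elements; it simultaneously defeats the ``pad the core with spare OR-tree vertices'' strategy on \textsc{no}-instances and powers the exchange argument pinning anchors to set-sources. Everything else — verifying the bound $2k+1$ at each gadget vertex, checking that a polynomial conveyor length meets the total helper-arc demand, and adapting to $k=1$ (drop the helpers and the conveyor, budget $b$, the hierarchy $T\gg\Pi\gg N_0$ alone) — is routine bookkeeping.
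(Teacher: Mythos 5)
Your construction is, at its heart, the same reduction the paper uses: the source problem is \textsc{Set Cover} (\classW{2}-hard parameterized by the number of sets), the anchor budget becomes $b+k$, a width-$k$ anchored chain (your ``conveyor'', the paper's chained groups $D_1,\ldots,D_n$) supplies the $k-1$ extra in-neighbours needed when $k\geq 2$, and a large per-element ``payload'' (your $P_i$, the paper's long paths $P_j$ of length $\ell=2rn+r$) makes the core-size threshold $p$ encode that every element is covered. The paper packages this as two steps (a path-based $k=1$, $\Delta=3$ reduction, then a generic lift to all $k<\Delta/2$), while you do it in one step with binary trees, a global conveyor and the hierarchy $T\gg\Pi\gg N_0$; the forward direction and the final $\Pi>N_0$ counting are fine in outline.

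There is, however, one genuine gap in your reverse direction: the step ``if fewer than $k$ conveyor sources are anchored, the conveyor dies, hence so do all out-tree and OR-tree vertices, so all $k$ conveyor sources lie in $A$'' is false for your construction. A solution may instead anchor all $k$ vertices of some \emph{interior} conveyor layer; every later layer is then fully in the core, and so are all helper arcs leaving that suffix, so the conveyor does not die and the normalization is not established. This matters because your subsequent exchange argument explicitly exempts conveyor anchors (``any other non-conveyor, non-set-source anchor''), and the per-anchor bound it relies on (an exchanged anchor switches on at most $N_0+\Pi+1<T$ vertices) simply does not hold for conveyor anchors: $k$ of them acting jointly can switch on the conveyor suffix and, through its spare arcs, entire anchored out-trees, i.e.\ far more than $T$ vertices, so they cannot be traded one-by-one against dummy trees. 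The repair stays inside your own toolkit but must be made explicitly: (i) if fewer than $k$ anchors lie in the conveyor, then no conveyor layer is full, so no non-anchor conveyor vertex is in the core, and since each conveyor vertex carries at most one spare arc, at most one padded vertex can have all its helpers live; a direct count then gives $|V(H)|<p$; (ii) otherwise perform a \emph{joint} exchange replacing all conveyor anchors by the $k$ layer-$0$ sources, which is never worse because the fully live conveyor dominates any suffix, and only then run your remaining exchanges and the $\Pi>N_0$ counting. (A minor bookkeeping point in the same spirit: for $p$ to be a well-defined ``canonical'' value you must make every leaf-to-root path of an OR-tree the same length, e.g.\ by padding the number of leaves to a power of two.)
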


\begin{proof}
First, we prove the claim for $k=1$ and $\Delta=3$. We reduce from the {\sc $b$-Set Cover} problem which is known to be
\classW2-hard~\cite{downey-fellows-book}:

\begin{center}
\noindent\framebox{\begin{minipage}{4.50in}{\textsc{$b$-Set Cover} }\\
\emph{Input }: A collection $X=\{X_1,\ldots,X_r\}$ of subsets of a finite $n$-element set $U$ and a positive integer $b$. \\
\emph{Parameter}: $b$\\
\emph{Question}: Are there at most $b$ subsets $X_{i_1},\ldots,X_{i_{b}}$ such that these sets cover $U$, i.e., $U=
    \bigcup_{j=1}^{b}X_{i_j}$?
\end{minipage}}
\end{center}

\begin{figure}[ht]
\centering\scalebox{0.7}{\input{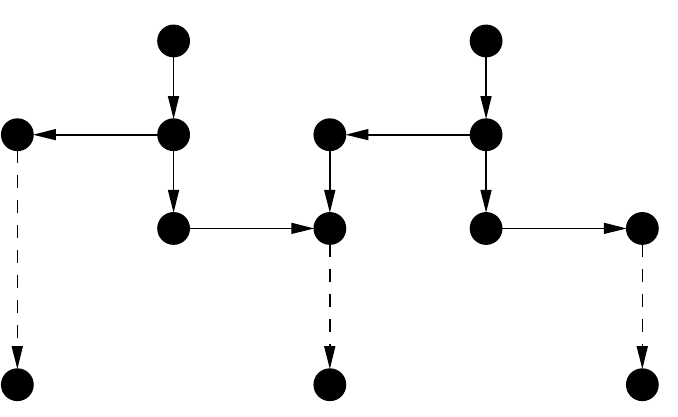_t}}
\caption{Construction of $G$ for $U=\{u_1,u_2,u_3\}$ and $X_1=\{u_1,u_2\},X_2=\{u_2,u_3\} $.
\label{fig:W2h-1}}
\end{figure}

Let $U=\{u_1,\ldots,u_n\}$. We construct the directed graph $G$ as follows (see Fig.~\ref{fig:W2h-1}).
\begin{itemize}
\item For $i\in\{1,\dots,r\}$, assume that $X_i=\{u_{j_1},\ldots,u_{j_s}\}$ and
\begin{itemize}
\item construct a vertex $v_i$ and $s$ vertices $x_{ij_1},\ldots,x_{ij_s}$;
\item construct arcs $(v_i,x_{ij_1}),(x_{ij_1},x_{ij_2}),\ldots,(x_{ij_{s-1}},x_{ij_s})$.
\end{itemize}
\item For $j\in\{1,\ldots,n\}$, assume that $u_j$ is included in the sets $X_{i_1},\ldots,X_{i_t}$ and
\begin{itemize}
\item construct a vertex $w_j$ and $t$ vertices $y_{ji_1},\ldots,y_{ji_t}$;
\item construct arcs $(y_{ji_1},y_{ji_2}),\ldots,(y_{ji_{t-1}},y_{ji_t})$;
\item join $y_{ji_t}$ with $w_j$ by a directed path $P_j$ of length $\ell=2rn+r$.
\end{itemize}
\item For $i\in\{1,\dots,r\}$ and $j\in\{1,\ldots,n\}$, if $u_j\in X_i$, then construct an arc $(x_{ij},y_{ji})$.
\end{itemize}
It is straightforward to see that $G$ is a directed acyclic graph of maximum degree at most 3. We set $p=n \ell$. We claim
that $U$ can be covered by at most $b$ sets if and only if there is a set of at most $b$ vertices $A$ such that there exists
an induced subgraph $H$ of $G$ with at least $p$ vertices, $A\subseteq V(H)$ and for any $v\in V(H)\setminus A$, $d_H^-(v)\geq
1$.

Notice that $v_1,\ldots,v_r$ are the sources of $G$, $w_1,\ldots,w_n$ are the sinks, and $V(G)=\bigcup_{i=1}^rR_{G}^+(v_i)$.
Observe also that $w_j$ can be reached from $v_i$ if and only if $u_j\in X_i$.

Suppose that $U$ can be covered by at most $b$ sets say $X_{i_1},\ldots,X_{i_{b}}$. Let $A=\{v_{i_1},\ldots,v_{i_{b}}\}$ and
$H=G[R_{G}^+(A)]$. It is straightforward to see that for any vertex $z\in V[H]$, $d_{H}^-(z)\geq 1$. Because $U$ is covered,
all vertices $w_1,\ldots,w_n$ are in $H$ and, therefore, $V(P_1)\cup\ldots\cup V(P_n)\subseteq V(H)$. It remains to observe
that $|V(P_1)\cup\ldots\cup V(P_n)|=n(\ell+1)\geq p$ and we conclude that $(A,H)$ is a solution of our instance of \DAKC.

Assume now that $(A,H)$ is a solution of the \DAKC problem. Without loss of generality we can assume that that each $a\in A$
is a source of $G$. Otherwise, $a\in R_G^+(v_i)$ for some source $v_i$, and we can replace $a$ by $v_i$ in $A$ (or delete it
if $v_i\in A$ already). Let $\{i\ |\ 1\leq i\leq n,~v_i\in A\}=\{i_1,\ldots,i_{b}\}$. We show that $X_1,\ldots,X_{i_{b}}$
cover $U$. To obtain a contradiction, assume that there is an element $u_j\in U$ such that $u_j\notin X_{i_1}\cup \ldots\cup
X_{i_{b}}$. Then the vertex $w_j$ is not reachable from $A$. Hence, the vertices of $P_j$ are not reachable from $A$. It
follows that $V(P_j)\cap V(H)=\emptyset$. We have that $|V(H)|\leq |V(G)|-|V(P_j)|$. Because $|X_i|\leq n$ for
$i\in\{1,\ldots,r\}$ and each $u_h$ is included in at most $r$ sets for $h\in\{1,\ldots,n\}$, $|V(G)|\leq
r(n+1)+n(r+\ell)=2rn+r+n \ell=2rn+r+p$. Therefore, $|V(H)|\leq p+(2rn+r-(\ell+1))<p$ because $P_j$ has $\ell+1$ vertices; a
contradiction.

Now we prove \classW{2}-hardness for $k\geq 2$ and $\Delta>2k$. We reduce from an instance of the \DAKC problem with $k=1$
and $\Delta=3$. Consider an instance of this problem with a directed acyclic graph $G$ and positive integers $b, p$. Assume
that $b\leq p\leq |V(G)|$ and $|V(G)|\geq 3$. We construct the graph $G'$ as follows (see Fig.~\ref{fig:W2h-2}).

\begin{figure}[ht]
\centering\scalebox{0.7}{\input{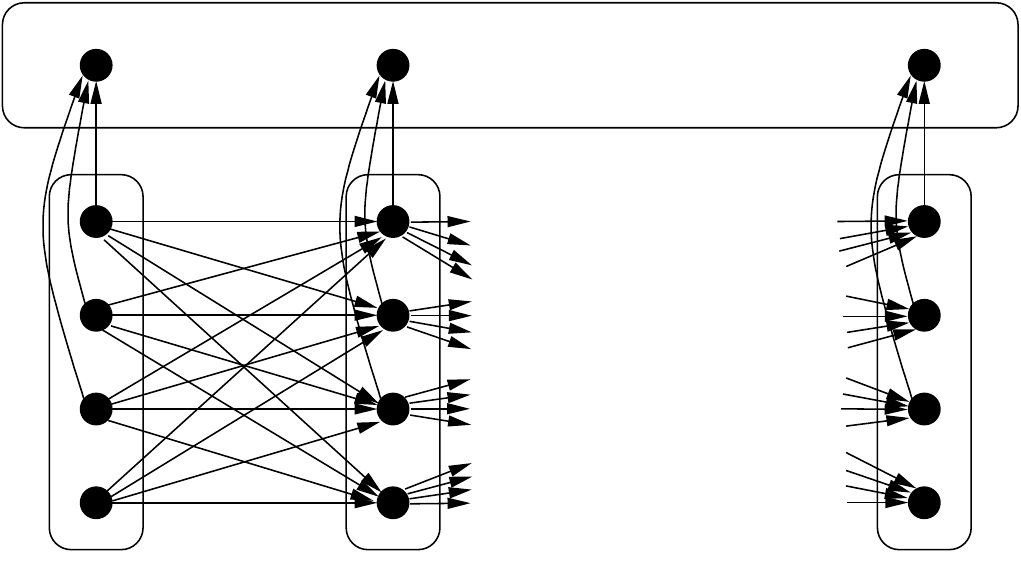_t}}
\caption{Construction of $G'$ for $k=4$.
\label{fig:W2h-2}}
\end{figure}

\begin{itemize}
\item Construct a copy of $G$ and denote its vertices by $v_1,\ldots,v_n$.
\item For each $i\in\{1,\ldots,n\}$, construct a set of $k$ vertices $D_i$ and join $k-1$ vertices of this set with $v_i$
    by arcs.
\item For each $i\in \{2,\ldots,n\}$, join each vertex of $D_{i-1}$ with all vertices of $D_i$ by arcs.
\end{itemize}
Clearly, $G'$ is a directed acyclic graph. We let $b'=b+k$ and $p'=p+nk$. Let also $D=D_1\cup\ldots\cup D_n$. Notice that for
each $v\in V(G)$, $d_{G'}(v)=d_G(v)+k-1\leq k+2\leq\Delta$ as maximum degree of $G$ is 3. For $v\in D$, $d_{G'}(v)\leq
2k+1\leq\Delta$. Hence maximum degree of $G'$ is at most $\Delta$. We now claim that there is a set of at most $b$ vertices
$A\subseteq V(G)$ such that there exists an an induced subgraph $H$ of $G$ with at least $p$ vertices, $A\subseteq V(H)$ and
for any $v\in V(H')\setminus A$,  $d_{H}^-(v)\geq 1$ if and only if there is a set of at most $b'$ vertices $A'\subseteq
V(G')$ such that there exists an an induced subgraph $H'$ of $G'$ with at least $p'$ vertices, $A'\subseteq V(H')$ and for any
$v\in V(H)\setminus A$,  $d_{H'}^-(v)\geq k$.

Suppose that our original instance of \DAKC has a solution $(A,H)$. We let $A'=A\cup D_1$ and $H'=G'[V(H)\cup D]$. Then each
vertex $v\in D\setminus A'$ has $k$ in-neighbors in $D$. It remains to observe that each vertex $v$ of $G'$ from
$V(G)\setminus A'$ has at least one in-neighbor in $V(G)$ and $k-1$ in-neighbors in $D$. Therefore, $d_{G'}^-(v)\geq k$.

Assume now that $(A',H')$ is a solution for the constructed instance of \DAKC with $|A'|\leq b'$ and $|V(H)|\geq p'$. If
$|D\cap A'|<k$, then we claim that $D\cap V(H')\subseteq A'$. To prove it, suppose that $(V(H')\cap D)\setminus
A\neq\emptyset$ and consider the smallest index $i$ such that there is $v\in (V(H')\cap D_i)\setminus A$. Clearly, $i\geq 2$.
The vertex $v$ has in-neighbors only in $D_{i-1}$. By the choice of $i$, $D_{i-1}$ has at most $k-1$ vertices of $H'$, because
they can be only anchors and $|D\cap A'|<k$. Then $d_{H'}^-(v)<k$, a contradiction.

Then if $|D\cap A'|<k$, $V(H')\subseteq V(G)\cup A'$ and $|V(H')|\leq n+b+k\leq n+p+k<p'$ as $n\geq 3$ and $k\geq 2$. This
contradicts our assumption about size of $H'$. Hence, at least $k$ anchors are in $D$ and $|A'\setminus D|\leq b$. Let
$A=A'\setminus D$ and $H=H'-D$. If $v\in V(H)\setminus A$, then $d_{H'}^-(v)\geq k$ and $v$ has at most $k-1$ in-neighbors
from $D$ in $H'$. Then $v$ has at least one in-neighbor in $V(H)$ and $d_H^-(v)\geq 1$.
\end{proof}

The case of $k\geq \frac{\Delta}{2}$ the complexity of parameterization by $b$ on DAGs is left open. However we can show that
\DAKC is \classFPT on DAGs of maximum degree $\Delta$, when parameterized by $\Delta+p$.

\begin{theorem}\label{thm:dags}
For any positive integers $p$ and $\Delta$, \DAKC can be solved in time $2^{O(\Delta p)}\cdot n^2\log n$ for
$n$-vertex DAGs of maximum degree at most $\Delta$.
\end{theorem}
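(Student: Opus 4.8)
The plan is to exploit the acyclicity of the input to reduce the search to cores of size \emph{exactly} $p$, and then to invoke Lemma~\ref{lem:bounded} with $q=p$. As usual we may assume $b\le p\le n$, and we may also assume $k\le\Delta$ (otherwise there are no non-anchor vertices and the instance is trivial).

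The one substantive step is the following size-normalization: \emph{if \DAKC has a solution on a DAG, then it has a solution $(A,H)$ with $|V(H)|=p$.} To prove it I would take a solution $(A,H)$ minimizing $|V(H)|$ and suppose $|V(H)|>p$. Since $G[V(H)]$ is a nonempty finite acyclic graph it has a sink $v$, i.e.\ $v$ has no out-neighbour in $H$, so $v\notin N^-_{G[V(H)]}(u)$ for every $u\in V(H)$. Consequently deleting $v$ leaves $d^-_{G[V(H)\setminus\{v\}]}(u)=d^-_{G[V(H)]}(u)$ for all $u\neq v$, and in particular every $u\in(V(H)\setminus\{v\})\setminus A$ still has in-degree at least $k$ in the induced subgraph. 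Putting $A':=A\setminus\{v\}$ we get $|A'|\le b$ and $|V(H)\setminus\{v\}|=|V(H)|-1\ge p$, so $(A',G[V(H)\setminus\{v\}])$ is a strictly smaller solution, contradicting minimality. Hence every minimal solution has exactly $p$ vertices. This is precisely where acyclicity is essential: already for a directed cycle with $k=1$ one cannot peel a vertex off a solution, so for general digraphs the core need not be shrinkable down to size $p$.

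Given the normalization the algorithm is immediate: run the deterministic algorithm of Lemma~\ref{lem:bounded} with $q:=p$. It either returns a solution $(A,H)$ with $|V(H)|\ge p$, in which case we output YES together with this pair, or it certifies that there is no solution with $|V(H)|\le q=p$; by the normalization the latter means that no solution exists at all, so we output NO. The running time is that of Lemma~\ref{lem:bounded} for $q=p$, namely $2^{O(\Delta p)}\cdot n\log n$, to which the (linear-time) computation of degrees and weakly connected components and the final bookkeeping add only polynomial overhead; this is within the claimed $2^{O(\Delta p)}\cdot n^2\log n$ bound.

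I do not expect a genuine obstacle here: once Lemma~\ref{lem:bounded} is available, all the content is in the normalization, and the only points that need care are that (i) the sink being removed may itself be an anchor, which is why we simultaneously shrink $A$, and (ii) Lemma~\ref{lem:bounded} only rules out \emph{small} solutions, a gap that the normalization is designed exactly to close.
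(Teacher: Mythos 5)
Your proof is correct. It rests on the same basic observation as the paper's own argument --- in a DAG a sink of the core has no out-arcs inside the core, so it can be deleted without lowering any other vertex's in-degree --- but you deploy it differently. The paper uses the observation algorithmically: after a failed call to Lemma~\ref{lem:bounded} with $q=p$ it removes a sink $t$ of the whole graph $G$ (arguing that $t$ can be dropped from any solution of size at least $p+1$) and recurses on $G-t$, re-running the lemma up to $n$ times; that iteration is the source of the extra factor of $n$ in the stated $2^{O(\Delta p)}\cdot n^2\log n$ bound. You instead use the observation purely in the analysis, peeling sinks of $G[V(H)]$ from a minimum-size solution to show that on a DAG a solution exists if and only if one of size exactly $p$ exists; this normalization closes exactly the gap left by Lemma~\ref{lem:bounded} (which only excludes solutions of size at most $q$), so a single invocation with $q=p$ suffices. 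The payoff is a non-recursive algorithm whose running time $2^{O(\Delta p)}\cdot n\log n$ is in fact better than the claimed bound, which it of course also satisfies. Your attention to the two delicate points --- removing the peeled sink from $A$ as well as from the core, and using $|V(H)|>p$ so that the shrunken core still has at least $p$ vertices --- is exactly what is needed for the normalization to go through.
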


\begin{proof}
Consider an instance of \DAKC with an $n$-vertex directed acyclic graph $G$. Without loss of generality we can assume that
$b\leq p\leq n$.

We apply Lemma~\ref{lem:bounded} for $q=p$. In time $2^{O(\Delta p)}\cdot n\log n$ we either obtain a solution or conclude
that for any solution $(A,H)$, $H$ has size at least $p+1$. If we obtain a solution, we return it. Suppose that we got a
NO-answer. If $p=n$, then we return a NO-answer. Otherwise,  we select a sink $t\in V(G)$ using the fact that any directed
acyclic graph has at least one such vertex. Observe that we can assume that $t$ is not an anchor in any solution. Also if $t$
is included in a solution $H$ of size at least $p+1$, then $H-t$ is a solution of size at least $p$, because $t$ is not joined
by arcs with other vertices of $H$. Then we solve the instance $G-t$ of \DAKC recursively.

As each step is done in time $2^{O(\Delta p)}\cdot n\log n$ and the number of steps is at most $n$, the claim follows.
\end{proof}

Let us remark that this result can be easily extended for any class of directed acyclic graphs $\mathcal{G}$ such that the
corresponding class of underlaying graphs $\{G^*|G\in \mathcal{G}\}$ has (locally) bounded expansion by making use of  the
results by Dvorak et al.~\cite{DvorakKT10}.
Finally, what happens when the input graph is planar? We know that the problem is \classNP-complete on planar graphs for fixed
$k\geq 1$ and maximum degree $k+2$. Is the problem \classFPT\ on planar directed graphs when parameterized by the size of the
core $p$?


\end{document}